\documentclass{article}

\usepackage{microtype}
\usepackage{graphicx}
\usepackage{subcaption}
\usepackage{booktabs} % for professional tables
\usepackage{mathtools}
\usepackage{placeins}
\usepackage{dblfloatfix} 
\usepackage{enumitem}
\usepackage{multirow}

\usepackage{siunitx}

\usepackage[table]{xcolor}
\definecolor{FuseTeal}{HTML}{8ECFC9}
\definecolor{FuseBlue}{HTML}{82B0D2}
\definecolor{FuseGray}{HTML}{E7DAD2}

\usepackage{stmaryrd}  % provides \llbracket and \rrbracket
\newcommand{\share}[1]{\llbracket #1 \rrbracket}

\usepackage{hyperref}

\usepackage[Protocol]{algorithm}
\newcommand{\Z}[1]{\mathbb{Z}_{#1}}
\newcommand{\ind}{\mathbb{I}}

\newcommand{\bits}{\{0,1\}}
\newcommand{\msb}{\mathrm{MSB}}
\let\one\relax
\protected\def\one[#1]{\ind_{\left[#1\right]}}
\newcommand{\hatx}{\hat{x}}
\newcommand{\rin}{r_{\mathrm{in}}}
\newcommand{\rout}{r_{\mathrm{out}}}
\newcommand{\bshare}[1]{\langle #1 \rangle} % XOR shares over Z_2
\let\share\ashare

\usepackage[accepted]{icml2026}
\usepackage{amsmath}
\usepackage{amssymb}
\usepackage{mathtools}
\usepackage{amsthm}

\usepackage[capitalize,noabbrev]{cleveref}

\theoremstyle{plain}
\newtheorem{theorem}{Theorem}[section]
\newtheorem{proposition}[theorem]{Proposition}
\newtheorem{lemma}[theorem]{Lemma}

\theoremstyle{definition}
\newtheorem{definition}[theorem]{Definition}

\theoremstyle{remark}

\usepackage[disable,textsize=tiny]{todonotes}

\providecommand{\State}{\STATE}
\providecommand{\Require}{\REQUIRE}
\providecommand{\Ensure}{\ENSURE}
\providecommand{\algorithmicrequire}{\textbf{Require:}}
\newlength{\algorithmicindent}
\providecommand{\Statex}{\item[]}
\providecommand{\Comment}[1]{\COMMENT{#1}}
\providecommand{\For}[1]{\FOR{#1}}
\providecommand{\EndFor}{\ENDFOR}

\icmltitlerunning{FuseFSS: Efficient Secure LLM Inference with Function Secret Sharing}

\begin{document}

\twocolumn[
  \icmltitle{FuseFSS: Efficient Secure LLM Inference with Function Secret Sharing}

  \icmlsetsymbol{equal}{*}

  \begin{icmlauthorlist}
    \icmlauthor{Yuhan Ma}{huawei,inet}
    \icmlauthor{Yong Li}{huawei}
    \icmlauthor{Stefan Schmid}{inet}
  \end{icmlauthorlist}

  \icmlaffiliation{huawei}
  {Huawei Heisenberg Research Center, Huawei Technologies D\"usseldorf GmbH, D\"usseldorf, Germany}

  \icmlaffiliation{inet}
  {Technische Universit\"at Berlin, Berlin, Germany}

  \icmlcorrespondingauthor{Yong Li}{Yong.Li1@huawei.com}

  % You may provide any keywords that you find helpful for describing your
  % paper; these are used to populate the "keywords" metadata in the PDF but
  % will not be shown in the document
  \icmlkeywords{Secure Inference, Secure Multi-party Computation (MPC), Function Secret Sharing (FSS), Transformers}

  \vskip 0.3in
]

% this must go after the closing bracket ] following \twocolumn[ ...

% This command actually creates the footnote in the first column listing the
% affiliations and the copyright notice. The command takes one argument, which
% is text to display at the start of the footnote. The \icmlEqualContribution
% command is standard text for equal contribution. Remove it (just {}) if you
% do not need this facility.

% Use ONE of the following lines. DO NOT remove the command.
% If you have no special notice, KEEP empty braces:
\printAffiliationsAndNotice{}  % no special notice (required even if empty)
% Or, if applicable, use the standard equal contribution text:
% \printAffiliationsAndNotice{\icmlEqualContribution}

\begin{abstract}
Two-server secure inference allows a client to query a hosted large language model (LLM) without revealing prompts or embeddings. Recent GPU systems based on function secret sharing (FSS) make linear layers efficient, but fixed-point nonlinearities and helper operations remain a bottleneck because each operator is typically implemented as a bespoke protocol with its own comparisons, wrap-around corrections, and preprocessing material. We present FuseFSS, a compiler that replaces per-operator protocol design with a single compilation pipeline. For each scalar fixed-point operator, a compact specification lists its interval partition, low-degree arithmetic pieces, and required predicate bits. The compiler emits two batched FSS evaluations on the public masked value: one packed comparison that returns all predicate bits, and one vector interval lookup that returns the active coefficients and constants. Compared to the current state-of-the-art FSS-based GPU secure inference, FuseFSS preserves accuracy while achieving a $1.24\times$--$1.50\times$ end-to-end speedup and reducing online communication by $9\%$--$16\%$ on BERT and GPT-style models; preprocessing is also lighter, with $14\%$--$23\%$ lower key-generation time and $20\%$--$24\%$ smaller keys.
\end{abstract}

\section{Introduction}
Large language models are increasingly deployed as hosted services~\cite{brown2020language}. Privacy-preserving inference is an active topic, with approaches based on homomorphic encryption and secure multi-party computation (MPC) for models from early CNNs to modern Transformers/LLMs~\cite{gilad2016cryptonets,vaswani2017attention,wu2024ditto}.
In many applications, the model is public, but the input is not: private prompts, confidential embeddings, medical notes, and proprietary features~\cite{mohassel2017secureml}.
Two-server secure inference protects these inputs by having the client secret-share~\cite{shamir1979share} its input between non-colluding servers that run MPC~\cite{rathee2020cryptflow2,gupta2023sigma,kei2025shaft}.

We study secure transformer inference in the preprocessing model.
An offline phase produces input-independent correlated randomness (e.g., Beaver triples~\cite{beaver1991efficient}) and FSS keys, while the online phase evaluates the model on secret shares with low latency~\cite{boyle2015function}. 
Recent GPU implementations show this approach can scale to transformer models~\cite{gupta2023sigma,jawalkar2024orca,kei2025shaft}.

\paragraph{What Remains Hard.}
In GPU-accelerated secure inference, linear layers can be made fast.
The dominant remaining costs come from elementwise nonlinearities and rescaling in fixed-point arithmetic over $R=\mathbb{Z}_{2^n}$~\cite{keller2020mp,wagh2022pika}. This fixed-point view is standard in MPC because ring arithmetic is cheap, while real-valued nonlinearities must be approximated and rescaled. These operators require comparisons and piece selection under modulo-$2^n$ wrap-around.
In the masked-wire paradigm used by Sigma and related FSS-based systems~\cite{gupta2023sigma,kei2025shaft}, parties reveal a public masked value $\hat{x}=x+\rin \bmod 2^n$ and evaluate predicates on $\hat{x}$.
Predicates must be rewritten under masking and require mask-derived carry and wrap information that must remain secret-shared.
Today, high performance relies on bespoke per-operator pipelines.
This per-operator approach is brittle: it complicates correctness arguments, duplicates key-generation logic, and makes it difficult to add new operators without introducing subtle wrap-around or signedness bugs.

\paragraph{Our Approach.}
We present FuseFSS, a compiler that replaces per-operator protocol engineering with a uniform, GPU-friendly structure.
FuseFSS is guided by a simple observation: across common fixed-point nonlinearities, the data-dependent part is largely the same.
Each operator can be viewed as selecting a region using a small set of predicate bits and then applying a low-degree arithmetic form with a small number of constants.
FuseFSS captures this pattern with an operator specification, a typed description of an elementwise fixed-point operator.
An operator specification provides an interval partition over canonical representatives of $\mathbb{Z}_{2^n}$, a low-degree polynomial per interval for arithmetic outputs, and Boolean helper bits expressed as predicate circuits.
Given an operator specification and preprocessing masks, FuseFSS compiles each operator instance into a uniform protocol built from two standard FSS evaluations on the public $\hat{x}$.
A single packed-comparison evaluation returns XOR shares of all predicate bits needed by the operator, and a single vector interval lookup returns additive shares of the active coefficients and constants.
The remaining work is a fixed share-based post-processing circuit using standard preprocessing primitives such as Beaver multiplication, Boolean AND, and bit-to-arithmetic conversion.

\paragraph{Security and Leakage.}
FuseFSS follows masked-wire semantics as in prior two-server FSS inference: it uses fresh independent masks per wire and never reuses a mask across different tensor elements.
To prevent mask leakage through key size or public instance shape, meaning any size parameters visible from the protocol, such as the number and bit-widths of comparisons and the interval-lookup dimensions, FuseFSS enforces mask-independent shapes.
The number and bit-widths of emitted comparisons and the interval-lookup shape depend only on the public operator specification and fixed-point metadata, not on sampled masks.

\paragraph{Results.}
We evaluate FuseFSS against Sigma~\cite{gupta2023sigma}, the state-of-the-art FSS-based secure inference baseline.
FuseFSS matches model accuracy and improves end-to-end performance by $1.24\times$--$1.50\times$, while reducing online communication by $9\%$--$16\%$ for BERT and GPT-style models~\cite{devlin2019bert,radford2019language}.
FuseFSS also reduces preprocessing overhead: key-generation time decreases by $14\%$--$23\%$, and key size shrinks by $20\%$--$24\%$.

\paragraph{Contributions.}
\begin{itemize}[nosep,leftmargin=*]
 \item \textbf{The state-of-the-art FSS-based secure inference performance.}
       FuseFSS improves end-to-end latency and preprocessing cost over state-of-the-art FSS-based GPU secure inference baselines, while matching model accuracy on BERT and GPT.
 \item \textbf{A practical compiler target for fixed-point nonlinear and rescaling operators.}
       We show that a wide range of fixed-point scalar operators can be expressed by a single operator specification format and executed using the same two FSS calls plus uniform share-based post-processing.
\item \textbf{Mask-aware compilation with security.}
      FuseFSS derives mask-correct predicate evaluation from public masked inputs and enforces mask-independent public shapes, enabling a single correctness and semi-honest security proof with explicit shape leakage that applies across operators.
\end{itemize}

\paragraph{Organization.}
Section~\ref{sec:related} reviews related works and contrasts FuseFSS with prior secure inference systems.
Section~\ref{sec:setting} defines the setting, masking, and the backend primitives.
Section~\ref{sec:suf_compile} defines operator specifications and the compilation procedure.
Section~\ref{sec:composite_fss} describes how we package compiled gates for batching and proves correctness and security.
Section~\ref{sec:evaluation} reports experimental results.

\section{Related Work}
\label{sec:related}
\paragraph{Secure Inference with Secret Sharing.}
A common starting point for privacy-preserving inference is preprocessing-based MPC over secret shares, where an offline phase produces correlated randomness such as Beaver triples and the online phase minimizes interaction and bandwidth~\cite{beaver1991efficient,mohassel2017secureml}.
This approach scales well for large linear layers, but modern transformer inference still stresses it because the model repeatedly invokes fixed-point scalar operators such as rescaling, smooth activations, exponentials, reciprocals, and normalization.
These operators introduce comparisons and piece selection under $\mathbb{Z}_{2^n}$ wrap around, and they tend to dominate both interaction rounds and preprocessing material in end-to-end deployments.

\paragraph{Private Transformer Inference Systems.}
A rapidly growing line of work studies end-to-end secure inference for transformer models and LLMs under two-party or dealer-based settings.
This direction spans both HE-style polynomial formulations and MPC-style runtimes~\cite{chandran2019ezpc}; recent work also explores polynomialized Transformer operators or quantization-aware secure inference pipelines~\cite{gilad2016cryptonets,zimerman2024converting,wu2024ditto}, complementing our systems-focused compiler for fixed-point scalar kernels.
IRON~\cite{hao2022iron} initiated private inference on transformers and developed specialized protocols for transformer-specific components such as softmax, GELU, and layer normalization.
Later systems such as BOLT~\cite{pang2024bolt} and BumbleBee~\cite{lu2023bumblebee} further reduce communication and optimize nonlinear computations.
These systems demonstrate that secure transformer inference can be practical, but they also highlight a recurring limitation that motivates our work: high performance typically relies on designing and validating a separate protocol pipeline for each operator, which makes extensibility and correctness under fixed-point wrap-around difficult.

\paragraph{GPU Acceleration and Function Secret Sharing.}
GPU acceleration has become central for reducing latency and for making secure inference competitive at practical model sizes.
FSS and its distributed point function (DPF) and distributed comparison function (DCF) instantiations offer an attractive trade-off by reducing communication and interaction for structured predicates and lookups~\cite{boyle2015function, boyle2016function, boyle2019secure, boyle2021function}.
Sigma shows that combining FSS with GPU execution enables efficient end-to-end transformer inference at scale and develops optimized building blocks for core fixed-point operators and normalization~\cite{gupta2023sigma}.
SHAFT further explores transformer specific optimizations, especially for softmax style computation, by improving numerical stability and reducing interaction in key subroutines~\cite{kei2025shaft}. 
Beyond the semi-honest setting, SHARK studies actively secure inference using FSS~\cite{gupta2025shark}.
Despite these advances, existing high-performance GPU systems still require substantial per-operator protocol engineering to handle mask correct predicate rewriting, wrap-around corner cases, and the interaction between bit logic and fixed-point arithmetic.
FuseFSS targets this exact gap by compiling a structured operator description into a constant number of standard FSS calls plus uniform share-based post-processing.

\section{Setting and Preliminaries}
\label{sec:setting}

\subsection{Threat Model and Preprocessing}
We consider two-party computation between parties $P_0,P_1$ in the standard preprocessing model, corresponding to the common ``two non-colluding servers'' setting~\cite{damgaard2012multiparty}.
A client may provide inputs as secret shares to $P_0,P_1$; the online protocol is run by $P_0,P_1$.
We assume semi-honest corruption of at most one party.
The transformer architecture and model parameters are public unless stated otherwise; the client inputs and intermediate activations are secret.
Our contribution concerns scalar nonlinearities and helper operations given arithmetic shares of their inputs, and composes with either public-weight or secret-shared linear layers.

The protocol has two phases.
The offline preprocessing phase produces correlated randomness and FSS keys.
The online phase evaluates the model on secret shares.
We assume a conceptual dealer for preprocessing and focus on online costs and the size/time of preprocessing material.
All preprocessing material is one-time: each gate instance consumes fresh masks/keys/triples per inference execution.

\paragraph{Wire-Level Masking.}
Following FSS-based systems, we treat each scalar wire (tensor element) as a distinct gate instance.
Preprocessing samples an independent uniform input mask $r_{\mathrm{in}}$ (and an independent output mask $r_{\mathrm{out}}$ when used) per wire.
Mask reuse across different wires is disallowed: if two wires shared the same $r_{\mathrm{in}}$, then the public masked openings would satisfy
$\hat{x}^{(1)}-\hat{x}^{(2)} = x^{(1)}-x^{(2)}$ and would leak a relation between secret activations.
Our implementation therefore generates masks per wire.

\subsection{Ring Arithmetic and Fixed Point}

We compute over $R=\mathbb{Z}_{2^n}$. For $x\in R$, let $\operatorname{rep}(x)\in\{0,\dots,2^n-1\}$ denote its canonical representative. Unsigned comparisons interpret each element by $\operatorname{rep}(x)$. Signed values use two's complement; the most significant bit (MSB) is $\msb(x):=\one[\operatorname{rep}(x)\ge 2^{n-1}]$. We write $\one[\mathcal{E}]\in\bits$ for the indicator of a predicate/event $\mathcal{E}$.

A real $\tilde{x}$ with $f$ fractional bits is encoded as $x=\lfloor 2^f\tilde{x}\rceil\in R$, where $\lfloor\cdot\rceil$ denotes rounding to the nearest integer with a fixed tie-breaking rule. Fixed-point rescaling is implemented via explicit truncation and arithmetic right shift (ARS) primitives; for signed values, ARS denotes a two's-complement right shift with sign extension~\cite{catrina2010secure}.

\subsection{Typed Sharing Domains}
\label{sec:typed_domains}
We use two base types and corresponding sharing domains~\cite{demmler2015aby}:

\begin{itemize}[nosep,leftmargin=*]
 \item \textbf{Arithmetic type} $\mathsf{A}_n$: values in $R=\mathbb{Z}_{2^n}$, represented as additive shares $\share{x}=(x_0,x_1)$ with $x=x_0+x_1\bmod 2^n$.
 \item \textbf{Bit type} $\mathsf{B}$: bits in $\bits$, represented as XOR shares $\langle b\rangle=(b_0,b_1)$ with $b=b_0\oplus b_1$.
\end{itemize}
We optionally attach fixed-point metadata to arithmetic wires; this metadata is part of the gate signature and determines which predicates and corrections are required, while cryptographic operations are performed in $R$.

\paragraph{Mixed-Domain Conversions.}
When a bit gates ring arithmetic, we use a standard preprocessing-based bit-to-arithmetic conversion ($\mathsf{B2A}$) that maps an XOR-sharing $\langle b\rangle$ to additive shares of the embedded ring element $b\in\{0,1\}\subset R$~\cite{mohassel2018aby3,patra2021aby2}.
Conversely, when a ring element must be converted to a bit, we use a standard preprocessing-based arithmetic-to-bit conversion ($\mathsf{A2B}$).
We treat these as black-box secure subprotocols and count their uses in complexity statements.

\subsection{Standard Preprocessing Subprotocols}
We use standard preprocessing-based multiplication over $R$ via Beaver triples and standard Boolean-AND correlation for XOR-shared bits~\cite{beaver1991efficient}.
XOR and NOT on XOR shares are local.
We treat these subprotocols as black boxes and count their invocations in our complexity statements.

\subsection{Masked-Wire Invariant and Conversions}
\label{sec:masked_invariant}
For FSS-based nonlinear evaluation, the offline preprocessing samples a uniform mask $r_{\mathrm{in}}\leftarrow R$ and distributes $\share{r_{\mathrm{in}}}$.
In the online phase, parties may reveal the public masked value $\hat{x}=x+r_{\mathrm{in}}\bmod 2^n$.
Since $r_{\mathrm{in}}$ is uniform and independent of $x$, $\hat{x}$ is uniform over $R$ and information-theoretically independent of $x$.
Consequently, any additional leakage about $r_{\mathrm{in}}$ (including via key sizes or mask-dependent instance shapes) must be avoided; our compiler enforces mask-independent public shapes and keeps all mask-derived bits secret-shared.

We use the standard masked opening routine, shown in Protocol~\ref{alg:shares_to_masked} and~\ref{alg:masked_to_shares} in Appendix~\ref{app:masking_protocols}:
to reveal $\hat{x}=x+r_{\mathrm{in}}$, each party locally adds its mask share and the parties reconstruct the sum; given public $\hat{x}$, parties locally recover additive shares of $x=\hat{x}-r_{\mathrm{in}}$ by subtracting their mask shares.

\subsection{Backend Interface: Two Standard FSS Primitives}
\label{sec:tfss_prelim}
Our compiler relies on two standard FSS primitive families that are already available in modern DPF/DCF-style systems~\cite{gilboa2014distributed,boyle2016function,boyle2021function}.
For each primitive family, preprocessing generates one key per party, and online evaluation is local on a public input.
We allow explicit leakage of public shapes, such as the number and bit-widths of emitted comparisons and the payload dimension of a lookup, but the compiler enforces that these shapes do not depend on sampled masks.

\paragraph{Public Views.}
For any $k\le n$ and constant $c\in \mathbb{Z}_{2^k}$, define
$\mathsf{view}_{k,c}(u) := ((u \bmod 2^k)+c)\bmod 2^k$.
Here $u \bmod 2^k$ denotes $\operatorname{rep}(u)\bmod 2^k$, namely the low $k$ bits of the canonical representative.
Since the masked wire $u=\hat{x}$ is public, each party computes such views locally.

\paragraph{Packed Comparisons.}
Given a list of queries $(k_t,c_t,\theta_t)$, evaluation returns XOR shares of
$\one[\mathsf{view}_{k_t,c_t}(u)<\theta_t]$ for all $t$.
This covers full-width comparisons, low-bit predicates, and MSB tests through shifts and thresholds.

\paragraph{Vector Interval Lookup.}
Given boundaries $0=\alpha_0<\cdots<\alpha_M=2^n$ and payload vectors $v_i\in R^p$, evaluation returns additive shares of the unique $v_{i^\star}$ such that $u\in[\alpha_{i^\star},\alpha_{i^\star+1})$.
We use this lookup to fetch all coefficients and per-interval constants for a scalar operator in one call.
Appendix~\ref{app:tfss_interface} gives a library-agnostic formalization of these two primitives.

\section{Operator Specifications and Mask-Aware Compilation}
\label{sec:suf_compile}
\subsection{Operator Specifications}
\label{sec:suf_def}

\begin{definition}[Operator specification]
\label{def:suf}
Fix $n\ge 1$ and let $R=\mathbb{Z}_{2^n}$.
A typed operator specification has signature
\[
 F:\mathsf{A}_n \to \mathsf{A}_n^r \times \mathsf{B}^\ell,
\]
optionally annotated with fixed-point metadata (fractional bits, signedness) for the arithmetic input/output wires.
It is specified by:
\begin{enumerate}[nosep,leftmargin=*]
 \item A full partition with integer boundaries $0=\alpha_0<\alpha_1<\cdots<\alpha_m=2^n$ and intervals $I_i=[\alpha_i,\alpha_{i+1})$ over canonical representatives.
 The boundary $\alpha_m=2^n$ is a sentinel: the predicate $\one[x<\alpha_m]$ is identically $1$ and is never emitted as a backend comparison.
 \item For each $I_i$, a vector of degree-$\le d$ polynomials $P_i(x)\in R[x]^r$ (evaluated in $R$), where $d$ is a global (descriptor-level) degree bound.
\item For each $I_i$, a vector of Boolean formulas $B_i(x)\in\bits^\ell$ built from primitive predicates with integer constant parameters:
$C_\beta(x)=\one[x<\beta]$ and $D_{\gamma,f}(x)=\one[(x\bmod 2^f)<\gamma]$.
Here $\beta\in\{0,\dots,2^n\}$, $f\in\{1,\dots,n\}$, and $\gamma\in\{0,\dots,2^f\}$.

 $\msb(\cdot)$ tests, and connectives $\neg,\wedge,\vee,\oplus$ (with semantics in $\mathbb{Z}_2$).
 Sentinel cases are constants: $C_0(x)\equiv 0$, $C_{2^n}(x)\equiv 1$, and similarly $D_{0,f}(x)\equiv 0$, $D_{2^f,f}(x)\equiv 1$.
\end{enumerate}
The induced function is $F(x)=(P_i(x),B_i(x))$ for the unique $i$ with $x\in I_i$.
\end{definition}

\subsection{Scope: Operator Specifications and Compatible Scalar Gates}
\label{sec:suf_scope}

\paragraph{Scalar Gates vs.\ Vector Blocks.}
Operator specifications (Definition~\ref{def:suf}) are an intermediate representation (IR) for scalar~\cite{demmler2021improved}, elementwise fixed-point operators over
$R=\mathbb{Z}_{2^n}$. A scalar gate consumes one masked wire $\hat{x}$ and outputs a constant-size tuple of arithmetic values in $R$ together with Boolean helper bits.
This captures the elementwise nonlinearities and fixed-point helpers dominating secure transformer inference (activations, $\mathrm{nExp}$, reciprocal/rsqrt, and rescaling such as truncation and ARS).

In contrast, vector-level operations that mix coordinates, such as reductions \texttt{max} and \texttt{sum}, sorting or top-$k$, and attention sparsification with data-dependent routing, are not univariate scalar maps and are handled by standard MPC subprotocols at the circuit/directed acyclic graph (DAG) level~\cite{juvekar2018gazelle}.

\paragraph{Specification-Compatible Scalar Primitives.}
Not every fixed-point primitive is a polynomial in $R$ (e.g., truncation/ARS involves dropping bits). We therefore separate the operator specification from a fixed post-processing circuit.

\begin{definition}[Specification-compatible scalar gate]
\label{def:suf_compatible}
A typed scalar gate $G:\mathsf{A}_n \to \mathsf{A}_n^{r'} \times \mathsf{B}^{\ell'}$ is specification-compatible
if there exist an operator specification $F:\mathsf{A}_n \to \mathsf{A}_n^{r} \times \mathsf{B}^{\ell}$ and a deterministic post-processing circuit $\Phi$ such that the following holds for every gate instance (i.e., every preprocessing mask choice).
For every $r_{\mathrm{in}}\in R$, let $\hat{x}=(x+r_{\mathrm{in}})\bmod 2^n$ be the public masked input,
and let $\kappa=\kappa(r_{\mathrm{in}})$ denote any mask-derived secret-shared instance constants required by compilation,
for example carry bits that depend only on $r_{\mathrm{in}}$ and descriptor constants.
Then for all $x\in R$,
\[
G(x)=\Phi\big(F(x),\kappa,x,\hat{x},\mathsf{pub}\big),
\]
where $\mathsf{pub}$ denotes any additional public parameters available at evaluation time, such as bit-widths, scaling metadata, and approximation parameters. In evaluation, parties can supply $\Phi$ with additive shares $\share{x}$ of the unmasked input (either the original gate input shares or derived locally from $(\hat{x},\share{r_{\mathrm{in}}})$ via Protocol~\ref{alg:masked_to_shares}); this does not introduce any additional communication or leakage beyond the masked opening $\hat{x}$.
The circuit $\Phi$ may use ring additions, a bounded number of Beaver-triple multiplications, Boolean XOR/NOT/AND on XOR-shares, and mixed-domain conversions (B2A/A2B) when a bit gates arithmetic.

\end{definition}

\paragraph{Composing Vector Blocks.}
Transformer blocks are expressed as DAGs that compose:
(i) linear operations over $R$ on arithmetic shares (matmul, add, sum reductions),
(ii) comparison-based reductions (e.g., max via a comparison tree), and
(iii) elementwise specification-compatible gates (e.g., nExp, reciprocal/rsqrt).
Appendix~\ref{app:scope} gives concrete decompositions for Softmax and LayerNorm.

\paragraph{Helper Bits and Boolean Normalization.}
Boolean outputs of $F$ are first-class typed results and may feed later computation.
To avoid data-dependent control flow, we express interval indicators as
$\one[x\in[\alpha_i,\alpha_{i+1})]=\one[x<\alpha_{i+1}]\oplus\one[x<\alpha_i]$
and use them to normalize piecewise Boolean outputs into a single global Boolean circuit.
Lemmas~\ref{lem:interval_indicator_xor} and~\ref{lem:boolean_normalize} are deferred to Appendix~\ref{app:indicator_lemmas}.

\subsection{Mask-Aware Rewriting Under Public Masking}
\label{sec:mask_rewrite}
Let $\hat{x}=x+r\bmod 2^n$ with uniform $r$ sampled in preprocessing.
We rewrite predicates on $x$ into Boolean formulas over comparisons on public $\hat{x}$, with secret mask-derived constants derived from $r$.
All equalities below are over $\bits$ with $\oplus$ denoting XOR.
Any mask-derived wrap/carry bit is kept secret-shared: revealing such a bit would leak information about $r$ and therefore about $x$ given $\hat{x}$.

\begin{lemma}[Masked rewrite for unsigned comparison]
\label{lem:mask_rewrite_cmp}
Let $N=2^n$ and interpret $R=\mathbb{Z}_{2^n}$ by canonical representatives in $\{0,\ldots,N-1\}$.
Fix an integer threshold $\beta\in\{0,1,\ldots,N\}$ and a mask $r\in R$.
Let $\hat{x}=(x+r)\bmod N$, $\theta=(r+\beta)\bmod N$, and let
$w=\one[r+\beta\ge N]$ denote the carry bit of the integer addition $r+\beta$.
Assume preprocessing provides an XOR-sharing $\langle w\rangle$ (equivalently, $w$ is a secret-shared constant).
Then for all $x\in R$,
\[
\one[x<\beta] \;=\; \one[\hat{x}<\theta]\ \oplus\ \one[\hat{x}<r]\ \oplus\ w,
\]
where all comparisons are under the canonical order on $\{0,\ldots,N-1\}$.
\end{lemma}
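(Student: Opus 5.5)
The plan is a direct case analysis on the two integer carries involved. Work with canonical representatives throughout and write $\hat{x}=x+r-N\cdot c$, where $c=\one[x+r\ge N]$ is the wrap bit of the masked opening. The first observation is that $c=\one[\hat{x}<r]$. If $c=0$, then $\hat{x}=x+r\ge r$. If $c=1$, then $\hat{x}=x+r-N<r$ because $x<N$. This identifies the middle term of the claimed identity as the secret wrap of the opening, computed from the public $\hat{x}$ and the secret threshold $r$.

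Next I would express $\one[x<\beta]$ in terms of $\hat{x}$ and $\theta$. Since $\theta=r+\beta-Nw$, the condition $x<\beta$ is equivalent to $x+r<r+\beta$, which can be rewritten as $\hat{x}+Nc<\theta+Nw$. The proof then splits into four cases on $(c,w)\in\{0,1\}^2$.
\begin{itemize}[nosep,leftmargin=*]
\item If $c=w$, the $N$-terms cancel, so $\one[x<\beta]=\one[\hat{x}<\theta]$. The two extra terms $\one[\hat{x}<r]\oplus w=c\oplus w$ equal $0$, and the identity holds.
\item If $c=1$ and $w=0$, we need $\hat{x}+N<\theta$. This is impossible because $\theta<N\le\hat{x}+N$, so $\one[x<\beta]=0$. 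The right-hand side is $\one[\hat{x}<\theta]\oplus 1$, so it suffices to show $\hat{x}<\theta$ in this case. This follows from $\hat{x}<r\le r+\beta=\theta$, using $w=0$.
\item If $c=0$ and $w=1$, the condition $\hat{x}<\theta+N$ always holds, so $\one[x<\beta]=1$. We must show $\hat{x}\ge\theta$. Indeed $\hat{x}\ge r$ since $c=0$, and $\theta=r+\beta-N<r$ because $\beta<N$ or ($\beta=N$, $\theta=r$); in either case $\theta\le r\le\hat{x}$.
\end{itemize}

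The main obstacle is the boundary $\beta=N$, together with $\beta=0$. When $\beta=N$, we get $w=1$ and $\theta=r$, so $\theta\le r$ still holds, and I will check that the case analysis above goes through unchanged. When $\beta=0$, we get $w=0$ and $\theta=r$, and the identity reduces to $0=\one[\hat{x}<r]\oplus\one[\hat{x}<r]$. I would state these sentinel cases explicitly after the main case split. Finally, I would note that the result needs only $\langle w\rangle$ as a secret constant and public comparisons of $\hat{x}$ against the secret thresholds $\theta$ and $r$. These comparisons are realized by DCF keys that embed $\theta$ and $r$, consistent with the packed comparison interface.
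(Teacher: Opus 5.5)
Your proof is correct, but it is organized differently from the paper's.

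The paper argues geometrically. Because $x\mapsto x+r$ is a rotation of $\mathbb{Z}_N$, the event $x<\beta$ is exactly the event that $\hat{x}$ lies in the cyclic interval $[r,r+\beta)\bmod N$. It then splits only on $w$, which decides whether that interval is the standard interval $[r,\theta)$ or the wrapped union $[r,N)\cup[0,\theta)$. In each case it verifies that $\one[u<\theta]\oplus\one[u<r]$ (respectively its complement) is the membership indicator by checking three ranges of $u$.

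You argue arithmetically instead. You introduce the wrap bit $c=\one[x+r\ge N]$ of the opening itself, prove $c=\one[\hat{x}<r]$, and rewrite $x<\beta$ as the integer inequality $\hat{x}+Nc<\theta+Nw$. The lemma then reads as ``compare after removing both carries'': the diagonal cases are immediate and the off-diagonal cases are forced.

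What each route buys:
\begin{itemize}
\item Yours gives the middle term a concrete meaning: it is exactly the secret wrap of the masked opening. This explains at once why it cancels in the masked interval-indicator corollary ($c\oplus c=0$), and it ties the lemma to the same wrap bit that truncation-style corrections need.
\item The paper's gives the cyclic-interval picture that it reuses in Lemma~\ref{lem:interval_translate} and in the compiler's boundary translation.
\end{itemize}

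One small wording fix: in the case $c=0$, $w=1$ you write $\theta=r+\beta-N<r$ and then allow $\theta=r$ when $\beta=N$. Simply state $\theta=r+\beta-N\le r$ because $\beta\le N$, which is all you use.
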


\noindent\textbf{Low-bit Predicates.} An analogous rewrite holds for $\one[(x\bmod 2^f)<\gamma]$; see Lemma~\ref{lem:mask_rewrite_lowbit} in Appendix~\ref{app:lowbit_rewrite}.

\paragraph{A Useful Corollary: Masked Interval Indicators Cancel the $\one[\hat{x}<r]$ Term.}
Combining Lemma~\ref{lem:interval_indicator_xor} with Lemma~\ref{lem:mask_rewrite_cmp}, the indicator $\one[x\in I_i]$ can be expressed using only comparisons to shifted boundaries in $\hat{x}$-space:
\[
 \begin{aligned}
 \one[x\in I_i]
 &=
 \one[\hat{x}<(\alpha_{i+1}+r)\bmod 2^n] \\
 &\quad\oplus
 \one[\hat{x}<(\alpha_i+r)\bmod 2^n]
 \oplus
 w_{i+1}\oplus w_i.
 \end{aligned}
\]
where $w_t=\one[r+\alpha_t\ge 2^n]$ are preprocessing-time carry bits (kept secret-shared for $t\in\{1,\ldots,m-1\}$; note $w_0=0$ and $w_m=1$ are public constants since $\alpha_0=0$ and $\alpha_m=2^n$).
This reduces the number of primitive masked comparisons needed for piece selection.

\paragraph{MSB and Signed Predicates.}
$\msb(x)$ can be expressed via an unsigned threshold test (e.g., $\msb(x)=\neg \one[x<2^{n-1}]$), and signed comparisons reduce to unsigned comparisons after a fixed constant shift of the canonical representative.
The corresponding masked rewrites follow by applying Lemma~\ref{lem:mask_rewrite_cmp} to the shifted comparisons; see Appendix~\ref{app:signed_msb_rewrites} for full derivations (including $\msb(x+c)$).

\subsection{Compiling One Operator Gate with Two FSS Calls}
\label{sec:suf_to_tfss}

\paragraph{Gate Instances.}
An operator specification is type-level and public.
A gate instance fixes preprocessing masks and therefore fixes the (secret) instance parameters used by backend primitive instances (shifted thresholds, translated boundaries, payloads) while revealing only their public shapes.

\begin{lemma}[Interval translation under masking]
\label{lem:interval_translate}
Let $I=[\alpha,\beta)\subseteq [0,2^n)$ be an interval over canonical representatives and let $\hat{x}=x+r\bmod 2^n$.
Then the image of $I$ under $x\mapsto \hat{x}$ is the cyclic interval $[\alpha+r,\beta+r)\bmod 2^n$,
which is either a standard interval or the union of two standard intervals.
Across a full partition, at most one interval wraps around $0$ after translation (and none wraps when the wrap point hits a boundary).
Hence an $m$-interval operator specification partition induces at most $m+1$ standard intervals in $\hat{x}$-space.
\end{lemma}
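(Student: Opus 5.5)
The plan is to treat the translation map $\tau_r:x\mapsto (x+r)\bmod 2^n$ as a bijection of $\{0,\dots,2^n-1\}$ that preserves cyclic adjacency, and to work with canonical representatives throughout. For a single interval $I=[\alpha,\beta)$ with $0\le\alpha<\beta\le 2^n$, I would first show that the image is $\{(\alpha+r+j)\bmod 2^n : 0\le j<\beta-\alpha\}$, which is by definition the cyclic interval $[\alpha+r,\beta+r)\bmod 2^n$. Writing $a=(\alpha+r)\bmod 2^n$ and $L=\beta-\alpha\le 2^n$, there are two cases. If $a+L\le 2^n$, the image is the standard interval $[a,a+L)$. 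Otherwise it is the union $[a,2^n)\cup[0,a+L-2^n)$, and both pieces are nonempty. This case split is the entire single-interval claim. It can also be cross-checked against Lemma~\ref{lem:mask_rewrite_cmp}: the carry bits $w$ encode exactly which case occurs.

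For the partition claim, I would call the translated interval $\tau_r(I_i)$ \emph{wrapping} when it contains both $2^n-1$ and $0$. Equivalently, the preimages $x_1=\tau_r^{-1}(2^n-1)$ and $x_0=\tau_r^{-1}(0)$ lie in the same $I_i$, with $x_0$ immediately following $x_1$ cyclically. Since the $I_i$ partition the domain and $x_0$ is a single point, at most one interval contains $x_0$, so at most one interval wraps. For the boundary case, $x_0=(-r)\bmod 2^n$. If $x_0$ equals some boundary $\alpha_t$ (including $\alpha_0=0$, i.e.\ $r=0$), then $x_0$ is the left endpoint of $I_t$ while $x_1=x_0-1$ lies in $I_{t-1}$, or in $I_{m-1}$ when $t=0$. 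The two points therefore lie in different intervals and no interval wraps. A small care point is the sentinel $\alpha_m=2^n$, which I identify with $\alpha_0$ cyclically.

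Counting then finishes the argument. Each non-wrapping interval contributes one standard interval, and the possible single wrapping interval contributes two, so there are at most $(m-1)+2=m+1$ standard intervals in $\hat{x}$-space. Since $\tau_r$ is a bijection, these pieces are disjoint and cover $[0,2^n)$. Sorting their endpoints therefore yields a valid boundary list $0=\alpha'_0<\dots<\alpha'_{M}=2^n$ with $M\le m+1$, as required by the vector interval lookup.

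The main obstacle is precise bookkeeping at the edges rather than any deep step: handling $\beta=2^n$ and the case $a+L=2^n$ exactly, which is non-wrapping because the half-open upper end lands on the sentinel. I would resolve this by stating the wrap condition strictly as $a+L>2^n$ and checking that it matches the point-based criterion above.
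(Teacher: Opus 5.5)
Your proposal is correct and follows essentially the same route as the paper: treat $x\mapsto\hat{x}$ as a rotation, split the single-interval image according to whether it crosses $2^n$, and note that a split requires the unique preimage $x_0=(-r)\bmod 2^n$ of $0$ to lie inside that interval, so at most one interval splits. One small slip is that your point-based definition of wrapping (the image contains both $2^n-1$ and $0$) misfires when $m=1$ and $r=0$: then $[0,2^n)$ maps to itself unsplit, yet $x_1=2^n-1\in I_{m-1}=I_0$, so your claim that $x_0$ and $x_1$ lie in different intervals fails there. Using the paper's criterion that $x_0$ lies \emph{strictly} inside $I_i$ (equivalently, your numeric test $a+L>2^n$) fixes this, and the rest of your argument goes through unchanged.
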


\paragraph{Compilation Sketch.}
Given an operator specification and a per-wire mask $r_{\mathrm{in}}$, the compiler
(i) translates boundaries into $\hat{x}$-space and pads the partition to a fixed interval count $M$,
(ii) rewrites all primitive predicates under masking into comparisons on the public $\hat{x}$ plus secret-shared carry bits,
(iii) collects all comparison atoms in a fixed descriptor-determined order to form one packed comparison instance, and
(iv) packages all per-interval coefficients/constants into one interval lookup instance.
The padding in (i) enforces mask-independent public instance shapes (and therefore mask-independent key lengths), avoiding leakage about $r_{\mathrm{in}}$ through shape.
Full pseudocode appears in Appendix~\ref{app:compile_alg} (Protocol~\ref{alg:compile_tfss}).

\paragraph{Payload Structure.}
For degree-$d$ and $r$ arithmetic outputs, the coefficient payload includes $r(d+1)$ ring elements (padding lower-degree polynomials with leading zeros if needed).
To support fused post-processing (e.g., truncation/ARS corrections), we allow $\Pi_{\mathrm{coeff}}$ to additionally return a small number of per-interval constants; the payload length is $p=r(d+1)+p_{\mathrm{aux}}$.

\begin{lemma}[Compiler correctness]
\label{lem:compiler_correctness}
Let $F:\mathsf{A}_n\to \mathsf{A}_n^r\times \mathsf{B}^\ell$ be a well-formed typed operator specification
descriptor with partition $(\alpha_i)_{i=0}^m$ and per-piece data $(P_i,B_i)$.
Let preprocessing sample a uniform mask $r_{\mathrm{in}}\in R$ and provide additive shares
$\share{r_{\mathrm{in}}}$, as well as any mask-derived secret-shared instance constants required by the
masked rewrite rules (e.g., carry bits in Lemmas~\ref{lem:mask_rewrite_cmp}--\ref{lem:mask_rewrite_lowbit}).
Let $(\Pi_{\mathrm{pred}},\Pi_{\mathrm{coeff}})$ be the packed comparison and interval lookup
instances produced by Protocol~\ref{alg:compile_tfss} for $(F,r_{\mathrm{in}})$.
Then for every $x\in R$ and $\hat{x}=x+r_{\mathrm{in}}\bmod 2^n$, the online evaluation procedure
(open $\hat{x}$, evaluate $\Pi_{\mathrm{pred}}$ and $\Pi_{\mathrm{coeff}}$ on $\hat{x}$, derive shares of $x$ locally, evaluate Horner and the normalized Boolean circuit)
outputs shares $(\share{\mathbf{y}},\langle\mathbf{z}\rangle)$ that
reconstruct to $F(x)=(\mathbf{y},\mathbf{z})$.

Moreover, for any specification-compatible scalar gate $G$ with $G(x)=\Phi(F(x),\kappa,x,\hat{x},\mathsf{pub})$
(Definition~\ref{def:suf_compatible}), the same compiled instances together with the fixed
share-based evaluation of $\Phi$ reconstruct to $G(x)$.
\end{lemma}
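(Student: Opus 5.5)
The proof goes component by component, following the online procedure in the statement. Fix $x\in R$, a mask $r=r_{\mathrm{in}}$, and $\hat{x}=x+r\bmod 2^n$. The masked opening (Protocol~\ref{alg:shares_to_masked}) reveals exactly $\hat{x}$, and the local recovery (Protocol~\ref{alg:masked_to_shares}) yields shares $\share{x}$, since $\hat{x}-r_0-r_1=x$ in $R$. Let $i^\star$ be the unique index with $x\in I_{i^\star}$. It then suffices to prove two claims: (a) the lookup $\Pi_{\mathrm{coeff}}$ evaluated on $\hat{x}$ returns additive shares of the coefficient vector of $P_{i^\star}$ (plus its auxiliary constants); and (b) the packed comparison $\Pi_{\mathrm{pred}}$, combined with the preprocessed carry shares, yields XOR shares of every primitive predicate $C_\beta(x)$ and $D_{\gamma,f}(x)$ occurring in the descriptor.

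For (a), apply Lemma~\ref{lem:interval_translate}. The map $x\mapsto\hat{x}$ is a bijection of $R$ that sends each $I_i$ to the cyclic interval $[\alpha_i+r,\alpha_{i+1}+r)$. After translation, at most one interval wraps; Protocol~\ref{alg:compile_tfss} splits that one into two standard pieces and assigns both the same payload $v=\mathrm{coeff}(P_i)$. Padding to $M$ intervals inserts no new points, or assigns duplicated payloads to split sub-intervals, so the translated boundaries still form a valid partition $0=\alpha'_0<\cdots<\alpha'_M=2^n$ whose cells are labeled by a well-defined map back to the original index. Hence $\hat{x}$ lies in a cell labeled $i^\star$, and by the lookup semantics the returned shares reconstruct to $v_{i^\star}$. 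The key point to verify is that padding preserves both the ordering and the labeling.

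For (b), handle each primitive separately. For $C_\beta$, Lemma~\ref{lem:mask_rewrite_cmp} gives $C_\beta(x)=\one[\hat{x}<\theta]\oplus\one[\hat{x}<r]\oplus w$, where $\theta=(r+\beta)\bmod 2^n$. The first two terms are full-width queries $(n,0,\theta)$ and $(n,0,r)$ in the packed instance, and $\langle w\rangle$ is supplied by preprocessing. For $D_{\gamma,f}$, invoke Lemma~\ref{lem:mask_rewrite_lowbit} in the same way with views $\mathsf{view}_{f,\cdot}$. MSB tests and signed predicates reduce to these cases through the constant-shift rewrites in Appendix~\ref{app:signed_msb_rewrites}, and sentinel predicates are public constants. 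XOR is local, so each primitive's shares reconstruct to its true value. The piecewise Boolean outputs are then handled by Lemmas~\ref{lem:interval_indicator_xor} and~\ref{lem:boolean_normalize}. These express $B_{i^\star}(x)$ as a fixed global circuit over interval indicators and primitives, namely $\bigoplus_i \one[x\in I_i]\wedge B_i(x)$, which is valid because exactly one indicator equals $1$. Evaluating this circuit with local XOR/NOT and Boolean-AND correlations, each of which preserves reconstruction, therefore yields $\langle\mathbf{z}\rangle$ with $\mathbf{z}=B_{i^\star}(x)$.

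For the arithmetic outputs, Horner's rule evaluates $y_j=(\cdots(c_{j,d}x+c_{j,d-1})x+\cdots)+c_{j,0}$ using $d$ Beaver multiplications per output. Since Beaver multiplication and share addition are correct over $R$, and the coefficients are exactly those of $P_{i^\star}$ by (a), $\mathbf{y}=P_{i^\star}(x)$. For the \emph{moreover} part, all inputs of $\Phi$ are now available as correct shares: $F(x)$ from above, $\kappa$ from preprocessing, $\share{x}$, and public $\hat{x}$ and $\mathsf{pub}$. $\Phi$ is a fixed circuit built from gates (addition, Beaver multiplication, XOR/AND, B2A/A2B) whose share-level implementations are each correct, so a straightforward induction over the gates of $\Phi$ gives reconstruction to $\Phi(F(x),\kappa,x,\hat{x},\mathsf{pub})=G(x)$ by Definition~\ref{def:suf_compatible}. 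The main obstacle is the bookkeeping in (a), specifically showing that the wrap split and the mask-independent padding keep a consistent cell-to-piece labeling, together with checking the edge cases where $\alpha_i+r\equiv0$, in which case no interval wraps.
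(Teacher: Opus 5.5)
Your proposal is correct and follows essentially the same route as the paper's proof, given in the appendix as Steps 1--4 of the proof of Theorem~\ref{thm:two_call}, with the $\Phi$ part handled by composing correct share-level gates as in the compiled-gate proof sketch: lookup correctness via Lemma~\ref{lem:interval_translate} plus payload duplication, predicate correctness via the masked rewrites and carry shares, normalization via interval indicators, and Horner via Beaver multiplication. One small wording fix: in the no-wrap case with $m<2^n$ the padding \emph{does} insert a new dummy point $\delta$ with a duplicated payload, so ``inserts no new points, or \ldots'' should read ``either inserts nothing or inserts a dummy split with a duplicated payload''; your conclusion about a consistent cell-to-piece labeling is unaffected.
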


\subsection{Two-Call Evaluation Theorem}
\label{sec:two_call}

\begin{theorem}[Two-call evaluation for specification-compatible scalar gates]
\label{thm:two_call}
Assume a backend implementing the two primitive families in Section~\ref{sec:tfss_prelim}:
packed comparison with queries of the form $\one[\mathsf{view}_{k,c}(u)<\theta]$ and
interval lookup for vector payload lookup on $u\in R$.
Then any scalar gate instance (and, more generally, any specification-compatible scalar gate in
Definition~\ref{def:suf_compatible}) can be evaluated from a public masked input $\hat{x}$ using:

\begin{enumerate}[nosep,leftmargin=*]
 \item At most two non-interactive backend interface evaluations on $\hat{x}$:
 one packed comparison evaluation producing XOR-shares of all masked comparison atoms needed by the compiler
 (possibly at multiple bit-widths $k$ via the public view operator $\mathsf{view}_{k,c}$),
 and one interval lookup evaluation producing additive shares of the active coefficient/constant payload.
 Either call may be omitted if the compiled instance does not require it.
 \item $O(r\cdot d)$ ring multiplications (implemented via Beaver triples) for batched Horner evaluation of $r$ degree-$d$ polynomials
 on secret shares of $x=\hat{x}-r_{\mathrm{in}}$ (Protocol~\ref{alg:masked_to_shares}),
 plus an additional $M_\Phi$ ring multiplications (via Beaver triples) performed by the fixed post-processing circuit $\Phi$.
 Here $M_\Phi$ depends only on the gate type and public parameters (often a small constant, e.g., a constant number of refinement steps).
 \item A fixed post-processing circuit $\Phi$ whose remaining interactive cost is captured by
 $G_\wedge$ Boolean AND gates (on XOR shares) and $G_{\mathrm{mix}}$ mixed-domain uses,
 plus any use of secret-shared instance constants $\kappa$ (which require no interaction to consume).
\end{enumerate}
All remaining interaction is confined to the standard openings required by Beaver/AND/B2A subprotocols.
\end{theorem}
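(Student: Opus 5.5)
The plan is to derive Theorem~\ref{thm:two_call} from Lemma~\ref{lem:compiler_correctness} by a cost-accounting argument over the compiled online procedure. We fix a specification-compatible gate $G$ with witness $(F,\Phi)$ as in Definition~\ref{def:suf_compatible}. Let $(\Pi_{\mathrm{pred}},\Pi_{\mathrm{coeff}})$ be the instances that Protocol~\ref{alg:compile_tfss} produces for $(F,r_{\mathrm{in}})$. Lemma~\ref{lem:compiler_correctness} already says that the online procedure reconstructs to $G(x)$. So it remains to show that this procedure has exactly the claimed shape and costs. We go through its phases in order: open $\hat{x}$, make the backend calls, derive shares of $x$, run Horner, run the Boolean circuit, run $\Phi$. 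For each phase we check whether it is local or which black-box subprotocol it invokes.

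\textbf{Step 1: the two backend calls.} Every primitive predicate $C_\beta$, $D_{\gamma,f}$, or $\msb$ appearing in any $B_i$ or in $\Phi$ is rewritten using Lemma~\ref{lem:mask_rewrite_cmp}, Lemma~\ref{lem:mask_rewrite_lowbit}, or the signed/MSB rewrites. Each rewrite yields an XOR of comparison atoms $\one[\mathsf{view}_{k,c}(\hat{x})<\theta]$ and secret-shared carry bits. The low-bit case uses $k=f$ with $c=0$, and signed shifts use $c\neq 0$. The interval indicators are handled by the corollary following Lemma~\ref{lem:mask_rewrite_cmp}, so they also need only atoms of this form. The compiler collects all atoms in a fixed order into the single packed instance $\Pi_{\mathrm{pred}}$. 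By Lemma~\ref{lem:interval_translate} and the padding to $M$, all per-interval payloads fit into one lookup $\Pi_{\mathrm{coeff}}$ of payload length $p=r(d+1)+p_{\mathrm{aux}}$. Both evaluations are local on the public $\hat{x}$ by the interface of Section~\ref{sec:tfss_prelim}, so they are non-interactive. If the descriptor has no atoms, or has $m=1$ with constant payload, the corresponding call is simply omitted.

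\textbf{Step 2: arithmetic cost.} Shares of $x$ are obtained locally via Protocol~\ref{alg:masked_to_shares}. For each of the $r$ outputs, Horner's rule $y=(\cdots(a_d x+a_{d-1})x+\cdots)+a_0$ uses $d$ share-by-share products. Both factors are secret here, because the coefficients come from the lookup, so each product needs a Beaver triple. This gives $rd=O(r\cdot d)$ multiplications, and they can be batched across outputs. The Boolean outputs are computed by the normalized global circuit of Lemma~\ref{lem:boolean_normalize}. In that circuit, XOR and NOT gates and XORs with the carry bits $\kappa$ are local, and every $\wedge$ or $\vee$ gate becomes one AND gate (using $a\vee b=a\oplus b\oplus ab$). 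Finally, $\Phi$ is a fixed circuit determined by the gate type and $\mathsf{pub}$. By Definition~\ref{def:suf_compatible}, it contributes a bounded number $M_\Phi$ of Beaver multiplications, some AND gates, and some B2A/A2B uses. We define $G_\wedge$ and $G_{\mathrm{mix}}$ as the totals of these counts.

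\textbf{Step 3: confining interaction, the main obstacle.} This step must show that nothing else communicates and that the counts do not depend on the masks. The difficult part is checking that the numbers of atoms and ANDs, and $M$, are functions of the public descriptor alone. The reason to expect this is that the compiler emits rewrites for every predicate syntactically, whatever the carries are, and pads the partition to $M$ regardless of wrapping. This is what Lemma~\ref{lem:interval_translate} ensures: at most $m+1$ intervals always arise. Given mask-independence, the only interaction is (a) the masked opening of $\hat{x}$, which is the input to the theorem, and (b) the openings inside the Beaver, AND, and B2A/A2B black boxes. Consuming $\kappa$ is local because it only enters XORs or additions.
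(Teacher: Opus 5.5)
Your proposal is correct and follows the paper's proof in substance. The paper also walks through the compiled online procedure: interval translation with duplicated payloads for coefficient selection, the masked rewrite lemmas plus secret-shared carry bits for predicates, Boolean normalization, and Horner with Beaver triples. It then counts at most one call to each backend primitive, $r\cdot d$ multiplications, and the AND/mixed-domain cost of $\Phi$. Two differences are worth noting. First, the paper proves these correctness steps inline rather than citing Lemma~\ref{lem:compiler_correctness}; that lemma has no standalone proof in the paper, so those steps are exactly what you are outsourcing. Second, the mask-independence of $T$ and $M$ that you single out as the main obstacle is not needed for this theorem---it is the content of Proposition~\ref{prop:shape_padding} and matters only for the security statement.
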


\begin{proof}[Proof sketch]
$\Pi_{\mathrm{pred}}$ returns XOR-shares of all masked comparisons required by the rewritten predicate circuit (including those used to form interval indicators).
$\Pi_{\mathrm{coeff}}$ returns additive shares of the active polynomial coefficients (and any per-interval constants) without revealing the active interval.
Parties locally derive additive shares of $x=\hat{x}-r_{\mathrm{in}}$ and evaluate the selected polynomials via Horner's rule using Beaver triples.
Finally, they evaluate the normalized Boolean circuit over XOR shares using XOR/NOT locally and AND/B2A when needed, and apply $\Phi$.
\end{proof}

\subsection{Security in the Semi-Honest Preprocessing Model}
\label{sec:suf_security}

\paragraph{Leakage.}
A compiled gate instance induces public shape parameters:
the number of comparison queries $T$ (and their bit-width multiset $\{k_t\}$) in the packed comparison instance,
and the interval count $M$ and payload dimension $p$ in the interval lookup instance.
We model this as an explicit leakage function $\mathcal{L}_{\mathrm{shape}}$. Depending on the backend instantiation, the public interval lookup shape may be described either by an explicit interval count $M$ (boundary representation) or by a dense table bit-width $k$ (DPF-LUT representation); we subsume such parameters in $\mathcal{L}_{\mathrm{shape}}$.
In addition, the online protocol reveals public masked openings (e.g., $\hat{x}=x+r_{\mathrm{in}}$
and optional masked outputs), which are information-theoretically independent of secrets under fresh uniform masks;
we include them in the public transcript.
All mask-derived constants (e.g., carry bits) remain secret-shared and are treated as part of preprocessing material.

\paragraph{Ideal Functionality.}
Fix a gate type $\tau$ with ideal scalar functionality
$G_\tau(x)=\Phi_\tau(F_\tau(x),\kappa,x,\hat{x},\mathsf{pub})$ (Definition~\ref{def:suf_compatible}).
The ideal execution for one gate instance samples fresh masks and correlated randomness as in preprocessing,
reveals $\mathcal{L}_{\mathrm{shape}}$ and the public masked openings, and returns to each party additive/XOR shares
of the gate outputs consistent with $G_\tau(x)$.

\begin{theorem}[Semi-honest security with leakage (gate level)]
\label{thm:gate_security_leakage}
Assume:
(i) the packed-comparison primitive and the vector interval-lookup primitive satisfy standard single-key FSS security with explicit shape leakage, as discussed in Section~\ref{sec:tfss_prelim}; and
(ii) the preprocessing-based subprotocols used in post-processing, including Beaver multiplication over $R$, Boolean AND over $\mathbb{Z}_2$, and any invoked B2A conversions, are semi-honest secure.
Then for any compiled gate instance, the real-world view of a semi-honest adversary corrupting either party
is computationally indistinguishable from the view produced by a PPT simulator given only that party's input share, its output shares, the explicit shape leakage, and the public masked openings.
\end{theorem}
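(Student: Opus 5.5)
We argue by a hybrid simulation that follows the order of the online protocol for one gate instance. Fix a corrupted party $P_b$. Its real view has four parts: its preprocessing material (its share of $r_{\mathrm{in}}$, its FSS keys $k_b^{\mathrm{pred}}$ and $k_b^{\mathrm{coeff}}$, its shares of the carry bits $\kappa$, and its Beaver/AND/B2A correlations); the masked opening $\hat{x}$; the messages received during the Beaver, AND and B2A openings inside Horner evaluation and $\Phi$; and its output shares. The simulator $\mathcal{S}$ receives $\mathcal{L}_{\mathrm{shape}}$, $\hat{x}$ and any masked output openings, $P_b$'s input share, and $P_b$'s output shares. It produces the view as follows.

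\begin{enumerate}[nosep,leftmargin=*]
\item Sample $P_b$'s shares of $r_{\mathrm{in}}$ and of every mask-derived constant in $\kappa$ uniformly.
\item Run the single-key FSS simulators for $\Pi_{\mathrm{pred}}$ and $\Pi_{\mathrm{coeff}}$ on the public shapes $(T,\{k_t\})$ and $(M,p)$ to produce $k_b^{\mathrm{pred}}$ and $k_b^{\mathrm{coeff}}$.
\item Run the simulators of the Beaver, AND and B2A subprotocols, in the fixed order given by the compiled circuit.
\item Program the final local shares so they agree with the given output shares.
\end{enumerate}

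The indistinguishability argument uses the following hybrids.

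\begin{itemize}[nosep,leftmargin=*]
\item $H_0$ is the real view.
\item In $H_1$, replace $P_b$'s keys by the FSS simulator outputs. The instance parameters (shifted thresholds $(r_{\mathrm{in}}+\beta)\bmod 2^n$, translated and padded boundaries, payloads) depend on the secret $r_{\mathrm{in}}$. The public shapes do not: the compiler pads to a fixed $M$ and emits comparison atoms in a descriptor-determined order (Lemma~\ref{lem:interval_translate} and the compilation sketch). So the shape leakage handed to the FSS simulator is a function of the public specification only, and assumption (i) gives $H_0\approx_c H_1$ by reduction. Because the compiled gate uses two independent instances, this hybrid is split into two sub-steps, one per key.
\item In $H_2$, replace $P_b$'s shares of $\kappa$ and of $r_{\mathrm{in}}$ by uniform values. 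Each such share is one half of a fresh additive or XOR sharing and is therefore uniform on its own. In $H_1$ the keys no longer depend on $r_{\mathrm{in}}$, and the other party's shares are not in the view. The joint distribution with $\hat{x}$ is therefore unchanged: $\hat{x}$ is uniform and independent of $x$ because $r_{\mathrm{in}}$ is fresh and never reused across wires. This step is information-theoretic.
\item In $H_3$, replace each interactive subprotocol call, one by one, by its simulator. Its inputs are $P_b$'s current shares, and its outputs are fresh shares. Assumption (ii) together with the sequential composition theorem for semi-honest protocols in the preprocessing model gives $H_2\approx_c H_3$. Each Beaver opening reveals values masked by fresh triple components, so these openings are uniform.
\end{itemize}

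Finally, Lemma~\ref{lem:compiler_correctness} ensures that the joint distribution of the honest output shares and the simulated view is consistent with $G_\tau(x)$. This makes the simulation valid in the joint (view, output) sense required for semi-honest security rather than only for the view.

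The main obstacle is $H_0\to H_1$. The FSS keys are generated jointly with correlated secrets, namely the carry bits $w_t$, which are functions of the same $r_{\mathrm{in}}$ that defines the comparison thresholds. The reduction to single-key FSS security must therefore be able to generate all other preprocessing material itself, and in particular the other party's shares of $\kappa$, without knowing the function hidden in the key. I would handle this by having the reduction sample $r_{\mathrm{in}}$ itself, compute the thresholds and carry bits, and submit the two candidate function families to the FSS challenger: the real one and a fixed dummy function of the same shape. Security then needs only that the challenger's key hides which function it encodes given the shape, and the reduction can produce every other part of the view consistently on its own. The remaining hybrids are routine.
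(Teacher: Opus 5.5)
Your proposal is correct and follows the paper's own argument: you simulate the corrupted party's preprocessing (FSS keys from the single-key simulators on the mask-independent shape, uniform shares of $r_{\mathrm{in}}$ and of the carry bits), simulate the opening of $\hat{x}$ from the leaked value, observe that the two backend evaluations are local, replace each Beaver/AND/B2A call by its simulator, and finish by sequential composition. Your explicit hybrids, and in particular the point that the single-key FSS reduction can itself sample $r_{\mathrm{in}}$, the carry bits, and the honest party's contributions to later openings, make precise what the paper leaves implicit; the one step that both your argument and the paper's leave informal is your step~4, since programming the corrupted party's output shares needs either a fresh uniform share on each output path (a triple or carry-bit share) or a functionality that lets the simulator choose those shares.
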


\paragraph{Proof Sketch.}
The simulator samples masked openings uniformly (matching the real distribution under fresh masks),
uses the backend interface simulator(s) to generate indistinguishable keys and local outputs for the adversary's party
given $\mathcal{L}_{\mathrm{shape}}$,
and simulates Beaver/AND/B2A openings using their standard simulators.
A full hybrid proof is given in Appendix~\ref{app:security}.

\section{Compiled Gate Modules}
\label{sec:composite_fss}

The compilation procedure in Section~\ref{sec:suf_compile} turns each scalar operator into a small protocol with a fixed structure:
open the public masked input $\hat{x}=x+\rin\bmod 2^n$, evaluate (when needed) one packed-comparison instance and one vector interval-lookup instance on $\hat{x}$, and then run a shared post-processing circuit on secret shares.
For engineering and batching, we package the resulting protocol into a compiled gate module.
A compiled gate module is the unit that the transformer runtime invokes repeatedly across layers and tensors.

\subsection{Gate Interface}
A gate type is specified by:
(i) an operator specification (descriptor) $F:\mathsf{A}_n\to \mathsf{A}_n^r\times \mathsf{B}^\ell$, and
(ii) a fixed deterministic post-processing circuit
\[
\Phi:\big(\mathsf{A}_n^r\times \mathsf{B}^\ell\big)\times \mathsf{K}\times \mathsf{A}_n\times \mathsf{Pub}
\to \mathsf{A}_n^{r'}\times \mathsf{B}^{\ell'},
\]
where $\mathsf{K}$ is the type of any mask-derived secret-shared instance constants, such as carry bits in masked predicate rewrites, $\mathsf{A}_n$ is the secret-shared unmasked input $x$, and $\mathsf{Pub}$ denotes public values at evaluation time, e.g., $\hat{x}$ and fixed-point metadata.
A gate instance is one invocation of the gate on one scalar wire, together with its instance constants.

\subsection{Preprocessing}
For each gate instance, preprocessing provides to each party:
\begin{itemize}[nosep,leftmargin=*]
  \item mask shares $\share{\rin}$ and, when needed, output mask shares $\share{\rout}$,
  \item any mask-derived secret-shared instance constants required by compilation,
  \item FSS keys for the packed-comparison and interval-lookup instances emitted by the compiler, and
  \item correlated randomness for post-processing, including Beaver triples for ring multiplications, Boolean AND correlation, and any invoked B2A conversions.
\end{itemize}
A conceptual dealer samples masks, derives the instance constants, runs compilation for the descriptor, and generates the corresponding FSS keys and correlated randomness.
The dealer may also publish the public shape parameters of the two FSS instances, which we treat as explicit leakage.

\subsection{Online Evaluation}
Given an arithmetic-shared input $\share{x}$:
\begin{enumerate}[nosep,leftmargin=*]
  \item Materialize the public masked value $\hat{x}=x+\rin$ by a batched opening (Protocol~\ref{alg:shares_to_masked}).
  \item Evaluate packed comparisons when needed to obtain XOR-shares of all primitive predicate bits required by the compiler.
  \item Evaluate interval lookup when needed to obtain additive shares of the active coefficient and constant payload.
  \item Locally derive additive shares of $x=\hat{x}-\rin$ (Protocol~\ref{alg:masked_to_shares}) and evaluate the post-processing circuit $\Phi$.
\end{enumerate}
Optionally, if a consumer requires a masked public output, parties open $\hat{y}=y+\rout$ using fresh output masks.

\subsection{Correctness and Security}
Correctness follows from Lemma~\ref{lem:compiler_correctness}, and semi-honest security with explicit shape leakage follows from Theorem~\ref{thm:gate_security_leakage} by standard composition.

\section{Evaluation}
\label{sec:evaluation}

\begin{table*}[t]
\centering
\small
\setlength{\tabcolsep}{5pt}
\renewcommand{\arraystretch}{1.08}
\caption{End-to-end two-server inference (sequence length 128, batch size 1).}
\label{tab:e2e}
\begin{tabular}{l
  S[table-format=4.2] >{\columncolor{FuseTeal!12}}S[table-format=4.2] S[table-format=2.1]
  S[table-format=1.3] >{\columncolor{FuseTeal!12}}S[table-format=1.3]
  S[table-format=1.2] >{\columncolor{FuseTeal!12}}S[table-format=1.2]
  S[table-format=2.3] >{\columncolor{FuseTeal!12}}S[table-format=2.3]
}
\toprule
& \multicolumn{3}{c}{Online time (ms)} & \multicolumn{2}{c}{Comm (GB)} & \multicolumn{2}{c}{Keygen (s)} & \multicolumn{2}{c}{Key size (GB)} \\
\cmidrule(lr){2-4}\cmidrule(lr){5-6}\cmidrule(lr){7-8}\cmidrule(lr){9-10}
Model & {Sigma} & {FuseFSS} & {$\downarrow$ (\%)} & {Sigma} & {FuseFSS} & {Sigma} & {FuseFSS} & {Sigma} & {FuseFSS} \\
\midrule
BERT-tiny-128  & 63.90   & 42.60   & 33.3 & 0.021 & 0.018 & 0.07 & 0.06 & 0.350 & 0.268 \\
BERT-base-128  & 1613.80 & 1149.50 & 28.8 & 1.062 & 0.891 & 1.32 & 1.08 & 18.076 & 13.678 \\
BERT-large-128 & 4034.50 & 2997.90 & 25.7 & 2.833 & 2.376 & 3.21 & 2.48 & 48.799 & 37.075 \\
GPT-2-128      & 1423.90 & 1072.70 & 24.7 & 0.885 & 0.777 & 1.20 & 0.99 & 15.346 & 11.920 \\
GPT-Neo-128    & 6326.20 & 5115.80 & 19.1 & 4.326 & 3.917 & 5.42 & 4.35 & 81.805 & 65.729 \\
\bottomrule
\end{tabular}
\end{table*}

\begin{table*}[t]
\centering
\small
\setlength{\tabcolsep}{6pt}
\renewcommand{\arraystretch}{1.08}
\caption{Sequence-length sweep for BERT-base (batch size 1).}
\label{tab:seq_sweep}
\begin{tabular}{S[table-format=3.0]
  S[table-format=4.2] >{\columncolor{FuseTeal!12}}S[table-format=4.2] S[table-format=2.1]
  S[table-format=1.3] >{\columncolor{FuseTeal!12}}S[table-format=1.3]}
\toprule
{Seq} & {Sigma time (ms)} & {FuseFSS time (ms)} & {$\downarrow$ (\%)} & {Sigma comm (GB)} & {FuseFSS comm (GB)} \\
\midrule
32  & 642.30  & 545.50  & 15.1 & 0.199 & 0.179 \\
64  & 947.40  & 696.00  & 26.5 & 0.441 & 0.388 \\
128 & 1613.80 & 1149.50 & 28.8 & 1.062 & 0.891 \\
256 & 2991.50 & 2152.40 & 28.0 & 2.842 & 2.242 \\
512 & 7694.70 & 5324.10 & 30.8 & 8.553 & 6.325 \\
\bottomrule
\end{tabular}
\end{table*}

\subsection{Experimental Setup}
We evaluate FuseFSS in the standard two-server preprocessing model, reporting online latency/communication and preprocessing cost (key-generation time and key size) per inference. All experiments run with 2$\times$ RTX PRO 6000 Blackwell Workstation Edition GPUs (one GPU per party), 2$\times$ EPYC 9654 CPUs, and CUDA 13.0.

To quantify latency sensitivity, we also report projected online latency under a LAN/WAN model (LAN: 1\,GB/s, 0.5\,ms; WAN: 400\,MB/s, 4\,ms). 

\paragraph{Baselines.}
We compare against Sigma~\cite{gupta2023sigma}, the state-of-the-art FSS-based secure inference baseline.
FuseFSS is implemented as a drop-in replacement for Sigma's hand-written nonlinear protocols~\cite{ba2016layer,ramachandran2017searching}, so end-to-end deltas are attributable to our compilation strategy.
We cite SHAFT~\cite{kei2025shaft} as an orthogonal reference point for private transformer inference and attempted to run SHAFT on GPT in our environment, but the provided script failed due to a device mismatch.
Other systems such as BOLT~\cite{pang2024bolt} and BumbleBee~\cite{lu2023bumblebee} target different protocol families and hardware (CPU 2PC/HE or SPU) and are not directly comparable in our two-server GPU FSS setting; Appendix~\ref{sec:baseline_attempts} reports our best-effort baseline attempts and explains their comparability limitations.

\paragraph{Metrics.}
We report:
(i) online latency and online communication;
(ii) key generation time and key size for one inference execution.
Unless explicitly stated, reported numbers correspond to sequence length 128 and batch size 1.
Each configuration is run five times; we discard the first as warmup and report the median of the remaining four runs.

\subsection{End-to-End Transformer Inference}
\label{sec:eval_e2e}

Table~\ref{tab:e2e} reports end-to-end two-server inference costs for BERT/GPT models.
Across all tested models, FuseFSS reduces online latency by \textbf{19--33\%} and online communication by \textbf{9--16\%},
while also reducing preprocessing key size by \textbf{20--24\%} and key-generation time by \textbf{14--23\%}.
Under the LAN/WAN model above, BERT-base-128 corresponds to 3.99/3.36\,s (LAN) and 11.12/9.94\,s (WAN) for Sigma/FuseFSS,
and BERT-large-128 to 10.18/8.45\,s (LAN) and 26.58/23.39\,s (WAN).
 
To reconcile gate-level speedups with end-to-end gains, on BERT-base-128 the accelerated nonlinear/helper blocks account for 56\% of Sigma's online time and 54\% of its online communication, dropping to 42\% and 45\% under FuseFSS (Appendix~\ref{sec:e2e_breakdown}).
Appendix~\ref{sec:baseline_attempts} discusses best-effort runs for other private-transformer systems and explains why we keep Sigma as the like-for-like GPU FSS baseline.

\subsection{Scaling with Sequence Length}
For BERT-base, FuseFSS maintains gains as sequence length increases from 32 to 512 (Table~\ref{tab:seq_sweep}).
For lengths 64--512, FuseFSS achieves a $1.36\times$--$1.45\times$ speedup, while online communication savings increase with sequence length.

\subsection{Attribution and Scope}
\label{sec:eval_attribution_scope}
\begin{table}[t]
\centering
\small
\renewcommand{\arraystretch}{1.06}
\caption{Seq=128 softmax substep breakdown. The compiled path consists of $\mathrm{nExp}$ and reciprocal; the total softmax includes smaller steps not shown.}
\label{tab:softmax_substep_breakdown}
\begin{tabular*}{\columnwidth}{@{\extracolsep{\fill}} l l r >{\columncolor{FuseTeal!12}}r r @{}}
\toprule
Model & Substep & Sigma & FuseFSS & Speedup \\
\midrule
BERT-base & Compiled path & 223 & 49 & 4.50 \\
BERT-base & Max-reduction & 93 & 77 & 1.20 \\
BERT-base & Total softmax & 356 & 160 & 2.22 \\
GPT-2 & Compiled path & 132 & 28 & 4.70 \\
GPT-2 & Max-reduction & 68 & 56 & 1.20 \\
GPT-2 & Total softmax & 228 & 102 & 2.24 \\
\bottomrule
\end{tabular*}
\end{table}

The end-to-end gains come from the scalar nonlinear/helper path that FuseFSS compiles, not from optimizing every Transformer subprotocol.
Table~\ref{tab:softmax_substep_breakdown} shows that, in softmax, the compiled $\mathrm{nExp}$+reciprocal path accounts for 58--63\% of Sigma's softmax time and is accelerated by $4.5\times$--$4.7\times$.
Once this path is compressed, max-reduction becomes the dominant residual softmax cost; improving vector reductions is therefore complementary to FuseFSS rather than part of the operator-specification IR.
The full time breakdown in Appendix~\ref{sec:e2e_breakdown} shows the same pattern at the model level: on BERT-base-128, GELU+Softmax+LayerNorm explain 92.7\% of the end-to-end latency reduction, and on GPT-2-128 they explain nearly all of it.
Appendix~\ref{app:extra_eval} further reports accuracy, GPT-2 scaling up to 512 tokens, LLaMA-family 7B/8B runs, activation microbenchmarks, and baseline notes.

\FloatBarrier

\section{Conclusion}
Fixed-point scalar nonlinearities and rescaling helpers remain a primary bottleneck in two-server secure inference.
We presented FuseFSS, a compiler that replaces per-operator protocol engineering with a uniform compilation pipeline.
FuseFSS represents each elementwise fixed-point operator using a typed operator specification, and compiles each gate instance into the same two-call structure on the public masked wire: one packed comparison for predicate extraction and one vector interval lookup for coefficient and constant retrieval.
Compared with Sigma, FuseFSS improves end-to-end online inference performance and reduces preprocessing material while preserving model quality.

\paragraph{Discussion.}
FuseFSS targets the scalar nonlinear and helper kernels that dominate MPC inference costs; vector reductions remain outside the operator-specification IR and must be handled by standard MPC subprotocols and circuit-level composition.
Finally, our security analysis focuses on the semi-honest preprocessing model with two non-colluding servers; extending the same approach to stronger adversarial models is an important next step.

\section*{Acknowledgements}

We thank the anonymous reviewers for their thoughtful feedback.

\section*{Impact Statement}
This work aims to advance privacy-preserving machine learning by reducing the cost and engineering complexity of two-server secure inference for transformer models.
Our techniques can support deployment settings where users or organizations require strong confidentiality for prompts, embeddings, and intermediate activations (e.g., healthcare, finance, and enterprise workloads).

At the same time, more efficient private inference could make it easier to access powerful models without revealing inputs to a service provider, which may complicate monitoring and abuse prevention that rely on visibility into user queries.

A deployment can partially mitigate this tension by applying authenticated access control, rate limiting, and output-side auditing at authorized service endpoints where final logits or predictions are intentionally revealed, rather than inspecting private prompts during secure computation.

\bibliography{example_paper}

@article{brown2020language,
  title={Language Models are Few-Shot Learners},
  author={Brown, Tom and Mann, Benjamin and Ryder, Nick and Subbiah, Melanie and Kaplan, Jared D and Dhariwal, Prafulla and Neelakantan, Arvind and Shyam, Pranav and Sastry, Girish and Askell, Amanda and others},
  journal={Advances in Neural Information Processing Systems (NeurIPS)},
  volume={33},
  pages={1877--1901},
  year={2020}
}

@inproceedings{mohassel2017secureml,
  title={{SecureML}: A System for Scalable Privacy-Preserving Machine Learning},
  author={Mohassel, Payman and Zhang, Yupeng},
  booktitle={2017 IEEE Symposium on Security and Privacy (SP)},
  pages={19--38},
  year={2017},
  organization={IEEE}
}

@inproceedings{rathee2020cryptflow2,
  title={{CrypTFlow2}: Practical 2-party secure inference},
  author={Rathee, Deevashwer and Rathee, Mayank and Kumar, Nishant and Chandran, Nishanth and Gupta, Divya and Rastogi, Aseem and Sharma, Rahul},
  booktitle={Proceedings of the 2020 ACM SIGSAC Conference on Computer and Communications Security (CCS)},
  pages={325--342},
  year={2020}
}

@article{gupta2023sigma,
  title={{SIGMA}: Secure {GPT} Inference with Function Secret Sharing},
  author={Gupta, Kanav and Jawalkar, Neha and Mukherjee, Ananta and Chandran, Nishanth and Gupta, Divya and Panwar, Ashish and Sharma, Rahul},
  journal={Proceedings on Privacy Enhancing Technologies (PoPETs)},
  volume={2024},
  number={4},
  pages={61--79},
  year={2024}
}

@inproceedings{kei2025shaft,
  title={{SHAFT}: {Secure}, Handy, Accurate, and Fast Transformer Inference},
  author = {Andes Y. L. Kei and Sherman S. M. Chow},
  booktitle={Network and Distributed System Security Symposium (NDSS)},
  year={2025}
}

@inproceedings{boyle2015function,
  title={Function Secret Sharing},
  author={Boyle, Elette and Gilboa, Niv and Ishai, Yuval},
  booktitle={Annual international conference on the theory and applications of cryptographic techniques (EUROCRYPT)},
  pages={337--367},
  year={2015},
  publisher={Springer}
  }

@inproceedings{jawalkar2024orca,
  title={Orca: {FSS}-based secure training and inference with {GPU}s},
  author={Jawalkar, Neha and Gupta, Kanav and Basu, Arkaprava and Chandran, Nishanth and Gupta, Divya and Sharma, Rahul},
  booktitle={2024 IEEE Symposium on Security and Privacy (SP)},
  pages={597--616},
  year={2024},
  organization={IEEE}
}

@inproceedings{keller2020mp,
  title={{MP-SPDZ}: A versatile framework for multi-party computation},
  author={Keller, Marcel},
  booktitle={Proceedings of the 2020 ACM SIGSAC Conference on Computer and Communications Security (CCS)},
  pages={1575--1590},
  year={2020}
}

@article{wagh2022pika,
  title={Pika: Secure computation using function secret sharing over rings},
  author={Wagh, Sameer},
  journal={Proceedings on Privacy Enhancing Technologies (PoPETs)},
  volume={2022},
  number={4},
  pages={351--377},
  year={2022}
  }

@inproceedings{beaver1991efficient,
  title={Efficient multiparty protocols using circuit randomization},
  author={Beaver, Donald},
  booktitle={Annual international cryptology conference (CRYPTO)},
  pages={420--432},
  year={1991},
  organization={Springer}
}

@article{hao2022iron,
  title={Iron: Private inference on transformers},
  author={Hao, Meng and Li, Hongwei and Chen, Hanxiao and Xing, Pengzhi and Xu, Guowen and Zhang, Tianwei},
  journal={Advances in Neural Information Processing Systems (NeurIPS)},
  volume={35},
  pages={15718--15731},
  year={2022}
}

@inproceedings{pang2024bolt,
  title={{BOLT}: Privacy-Preserving, Accurate and Efficient Inference for Transformers},
  author={Pang, Qi and Zhu, Jinhao and M{\"o}llering, Helen and Zheng, Wenting and Schneider, Thomas},
  booktitle={2024 IEEE Symposium on Security and Privacy (SP)},
  pages={4753--4771},
  year={2024},
  organization={IEEE}
}

@inproceedings{lu2023bumblebee,
  author={Lu, Wen-jie and Huang, Zhicong and Gu, Zhen and Li, Jingyu and Liu, Jian and Hong, Cheng and Ren, Kui and Wei, Tao and Chen, WenGuang},
  title={{BumbleBee: Secure Two-party Inference Framework for Large Transformers}},
  booktitle={Network and Distributed System Security Symposium (NDSS)},
  year={2025}
}

@inproceedings{boyle2016function,
  title={Function secret sharing: Improvements and extensions},
  author={Boyle, Elette and Gilboa, Niv and Ishai, Yuval},
  booktitle={Proceedings of the 2016 ACM SIGSAC conference on computer and communications security (CCS)},
  pages={1292--1303},
  year={2016}
}

@inproceedings{boyle2019secure,
  title={Secure computation with preprocessing via function secret sharing},
  author={Boyle, Elette and Gilboa, Niv and Ishai, Yuval},
  booktitle={Theory of Cryptography Conference (TCC)},
  pages={341--371},
  year={2019},
  organization={Springer}
}

@inproceedings{boyle2021function,
  title={Function secret sharing for mixed-mode and fixed-point secure computation},
  author={Boyle, Elette and Chandran, Nishanth and Gilboa, Niv and Gupta, Divya and Ishai, Yuval and Kumar, Nishant and Rathee, Mayank},
  booktitle={Annual International Conference on the Theory and Applications of Cryptographic Techniques (EUROCRYPT)},
  pages={871--900},
  year={2021},
  organization={Springer}
}

@article{shamir1979share,
  title={How to Share a Secret},
  author={Shamir, Adi},
  journal={Communications of the ACM},
  volume={22},
  number={11},
  pages={612--613},
  year={1979},
  publisher={ACM},
  doi={10.1145/359168.359176}
}

@inproceedings{damgaard2012multiparty,
  title={Multiparty computation from somewhat homomorphic encryption},
  author={Damg{\aa}rd, Ivan and Pastro, Valerio and Smart, Nigel and Zakarias, Sarah},
  booktitle={Annual cryptology conference (CRYPTO)},
  pages={643--662},
  year={2012},
  organization={Springer}
}

@inproceedings{mohassel2018aby3,
  title={{ABY3}: A mixed protocol framework for machine learning},
  author={Mohassel, Payman and Rindal, Peter},
  booktitle={Proceedings of the 2018 ACM SIGSAC conference on computer and communications security (CCS)},
  pages={35--52},
  year={2018}
}

@inproceedings{patra2021aby2,
  title={{ABY2.0}: Improved Mixed-Protocol Secure Two-Party Computation},
  author={Patra, Arpita and Schneider, Thomas and Suresh, Ajith and Yalame, Hossein},
  booktitle={30th USENIX Security Symposium (USENIX Security)},
  pages={2165--2182},
  year={2021}
}

@inproceedings{gilboa2014distributed,
  title={Distributed point functions and their applications},
  author={Gilboa, Niv and Ishai, Yuval},
  booktitle={Annual International Conference on the Theory and Applications of Cryptographic Techniques (EUROCRYPT)},
  pages={640--658},
  year={2014},
  organization={Springer}
}

@inproceedings{catrina2010secure,
  title={Secure computation with fixed-point numbers},
  author={Catrina, Octavian and Saxena, Amitabh},
  booktitle={International Conference on Financial Cryptography and Data Security (FC)},
  pages={35--50},
  year={2010},
  organization={Springer}
}

@inproceedings{demmler2015aby,
  title={{ABY} - {A} framework for efficient mixed-protocol secure two-party computation.},
  author={Demmler, Daniel and Schneider, Thomas and Zohner, Michael},
  booktitle={Network and Distributed System Security Symposium (NDSS)},
  year={2015}
}

@inproceedings{demmler2021improved,
    author = {Demmler, Daniel and Katzenbeisser, Stefan and Schneider, Thomas and Schuster, Tom and Weinert, Christian},
    year = {2021},
    pages = {444--451},
    booktitle={Proceedings of the 18th International Conference on Security and Cryptography (SECRYPT)},
    title = {Improved Circuit Compilation for Hybrid {MPC} via Compiler Intermediate Representation},
}

@inproceedings{juvekar2018gazelle,
  title={{GAZELLE}: A Low Latency Framework for Secure Neural Network Inference},
  author={Juvekar, Chiraag and Vaikuntanathan, Vinod and Chandrakasan, Anantha},
  booktitle={27th USENIX Security Symposium (USENIX Security)},
  pages={1651--1669},
  year={2018}
}

@inproceedings{gupta2025shark,
  title={{SHARK}: Actively Secure Inference Using Function Secret Sharing},
  author={Gupta, Kanav and Chandran, Nishanth and Gupta, Divya and Katz, Jonathan and Sharma, Rahul},
  booktitle={2025 IEEE Symposium on Security and Privacy (SP)},
  pages={2472--2490},
  year={2025},
  organization={IEEE}
}

@inproceedings{gilad2016cryptonets,
  title={{CryptoNets}: Applying neural networks to encrypted data with high throughput and accuracy},
  author={Gilad-Bachrach, Ran and Dowlin, Nathan and Laine, Kim and Lauter, Kristin and Naehrig, Michael and Wernsing, John},
  booktitle={International Conference on Machine Learning (ICML)},
  pages={201--210},
  year={2016},
  organization={PMLR}
}

@InProceedings{wu2024ditto,
  title = 	 {Ditto: Quantization-aware Secure Inference of Transformers upon {MPC}},
  author =       {Wu, Haoqi and Fang, Wenjing and Zheng, Yancheng and Ma, Junming and Tan, Jin and Wang, Lei},
  booktitle = 	 {Proceedings of the 41st International Conference on Machine Learning (ICML)},
  pages = 	 {53346--53365},
  year = 	 {2024},
  volume = 	 {235},
  series = 	 {Proceedings of Machine Learning Research},
  publisher =    {PMLR}
}

@InProceedings{zimerman2024converting,
  title = 	 {Converting Transformers to Polynomial Form for Secure Inference Over Homomorphic Encryption},
  author =       {Zimerman, Itamar and Baruch, Moran and Drucker, Nir and Ezov, Gilad and Soceanu, Omri and Wolf, Lior},
  booktitle = 	 {Proceedings of the 41st International Conference on Machine Learning (ICML)},
  pages = 	 {62803--62814},
  year = 	 {2024},
  volume = 	 {235},
  series = 	 {Proceedings of Machine Learning Research},
  publisher =    {PMLR}
}

@inproceedings{chandran2019ezpc,
  title={{EzPC}: Programmable and efficient secure two-party computation for machine learning},
  author={Chandran, Nishanth and Gupta, Divya and Rastogi, Aseem and Sharma, Rahul and Tripathi, Shardul},
  booktitle={2019 IEEE European Symposium on Security and Privacy (EuroSP)},
  pages={496--511},
  year={2019},
  organization={IEEE}
}

@inproceedings{devlin2019bert,
  title={{BERT}: Pre-training of Deep Bidirectional Transformers for Language Understanding},
  author={Devlin, Jacob and Chang, Ming-Wei and Lee, Kenton and Toutanova, Kristina},
  booktitle={Proceedings of the 2019 Conference of the North American Chapter of the Association for Computational Linguistics: Human Language Technologies (NAACL-HLT)},
  pages={4171--4186},
  year={2019}
}

@article{radford2019language,
  title={Language models are unsupervised multitask learners},
  author={Radford, Alec and Wu, Jeffrey and Child, Rewon and Luan, David and Amodei, Dario and Sutskever, Ilya},
  journal={OpenAI},
  year={2019},
 url = {https://cdn.openai.com/better-language-models/language_models_are_unsupervised_multitask_learners.pdf}

}

@inproceedings{vaswani2017attention,
 author = {Vaswani, Ashish and Shazeer, Noam and Parmar, Niki and Uszkoreit, Jakob and Jones, Llion and Gomez, Aidan N and Kaiser, \L{}ukasz and Polosukhin, Illia},
 booktitle = {Advances in Neural Information Processing Systems (NeurIPS)},
 title = {Attention is All you Need},
 volume = {30},
 year = {2017}
}

@article{ba2016layer,
  title={Layer normalization},
  author={Ba, Jimmy Lei and Kiros, Jamie Ryan and Hinton, Geoffrey E},
  journal={arXiv preprint arXiv:1607.06450},
  year={2016}
}

@article{ramachandran2017searching,
  title={Searching for activation functions},
  author={Ramachandran, Prajit and Zoph, Barret and Le, Quoc V},
  journal={arXiv preprint arXiv:1710.05941},
  year={2017}
}
\bibliographystyle{icml2026}

%%%%%%%%%%%%%%%%%%%%%%%%%%%%%%%%%%%%%%%%%%%%%%%%%%%%%%%%%%%%%%%%%%%%%%%%%%%%%%%
%%%%%%%%%%%%%%%%%%%%%%%%%%%%%%%%%%%%%%%%%%%%%%%%%%%%%%%%%%%%%%%%%%%%%%%%%%%%%%%
% APPENDIX
%%%%%%%%%%%%%%%%%%%%%%%%%%%%%%%%%%%%%%%%%%%%%%%%%%%%%%%%%%%%%%%%%%%%%%%%%%%%%%%
%%%%%%%%%%%%%%%%%%%%%%%%%%%%%%%%%%%%%%%%%%%%%%%%%%%%%%%%%%%%%%%%%%%%%%%%%%%%%%%
\newpage
\appendix
\onecolumn
\section{Appendix Overview}
This appendix is organized as follows.
Appendix~\ref{app:extra_eval} reports additional experiments.
Appendix~\ref{app:masking_protocols} collects standard masking routines, and
Appendices~\ref{app:indicator_lemmas} and~\ref{app:lowbit_rewrite} provide supporting lemmas for interval indicators, Boolean normalization, and low-bit masked rewrites.
Appendix~\ref{app:compile_alg} gives the full compilation protocol.
Appendix~\ref{app:scope} clarifies the scope of operator specifications and how vector-level blocks are composed from standard MPC reductions and elementwise gates.
Appendix~\ref{app:typed_suf} consolidates the typing discipline and a minimal typed specification language.
Appendix~\ref{app:tfss_interface} restates the backend interface primitives and formalizes mask-independent public shapes via padding.
Appendices~\ref{app:rewrite_proofs},~\ref{app:interval_translate}, and~\ref{app:two_call_correctness} provide detailed correctness proofs.
Appendix~\ref{app:security} gives a semi-honest security statement with explicit leakage and its proof. Appendix~\ref{app:worked_example} provides a fully worked example of an operator specification and post-processing circuit $(F,\Phi)$.
Finally, Appendix~\ref{app:complexity} summarizes the per-gate complexity accounting.

\section{Additional Evaluation Results}
\label{app:extra_eval}
This appendix provides complementary evidence for the claims in Section~\ref{sec:evaluation}:
(i) accuracy under the fixed-point semantics used by secure inference,
(ii) end-to-end attribution tables and additional sequence/model scaling results,
(iii) gate-level microbenchmarks that isolate FuseFSS's compiled nonlinear activation kernels,
(iv) ablations that attribute performance to specific design choices (mask-independent shapes and program reuse), and
(v) best-effort baseline attempts for BOLT, BumbleBee, and SHAFT with implementation notes on why results are not directly comparable.

\subsection{Accuracy}
We verify that FuseFSS preserves model quality under the fixed-point arithmetic (including truncation/rounding)
used by secure inference.
We compare floating-point PyTorch inference with FuseFSS running the same models under our MPC-style fixed-point semantics.
Table~\ref{tab:accuracy} shows that FuseFSS closely matches PyTorch across GLUE and LAMBADA.

\begin{table}[H]
\centering
\footnotesize
\setlength{\tabcolsep}{5pt}
\renewcommand{\arraystretch}{1.05}
\caption{Accuracy under fixed-point semantics, where $\Delta=$ (FuseFSS$-$PyTorch).}
\label{tab:accuracy}
\begin{tabular}{l l r r r}
\toprule
Model & Task & PyTorch & FuseFSS & $\Delta$ \\
\midrule
\multirow{3}{*}{BERT-tiny}
  & SST-2  & 80.39 & 80.39 & +0.00 \\
  & MRPC   & 76.37 & 76.96 & +0.59 \\
  & QNLI   & 85.69 & 86.23 & +0.54 \\
\midrule
\multirow{4}{*}{BERT-base}
  & SST-2  & 89.33 & 89.33 & +0.00 \\
  & CoLA   & 83.43 & 83.45 & +0.02 \\
  & MRPC   & 88.73 & 88.48 & -0.25 \\
  & QNLI   & 91.55 & 91.67 & +0.12 \\
\midrule
\multirow{4}{*}{BERT-large}
  & SST-2  & 92.55 & 92.50 & -0.05 \\
  & CoLA   & 85.52 & 85.57 & +0.05 \\
  & MRPC   & 87.74 & 87.50 & -0.24 \\
  & QNLI   & 92.49 & 92.66 & +0.17 \\
\midrule
GPT-2   & LAMBADA & 60.59 & 60.90 & +0.31 \\
GPT-Neo & LAMBADA & 75.46 & 75.57 & +0.11 \\
\bottomrule
\end{tabular}

\end{table}

\subsection{End-to-End Attribution and Extended Scaling}
\label{sec:e2e_breakdown}

\paragraph{End-to-End Breakdown.}
To attribute end-to-end improvements to FuseFSS's compilation of nonlinear blocks,
Tables~\ref{tab:e2e_time_breakdown} and~\ref{tab:e2e_comm_breakdown} decompose the online cost into: (i) total send/recv time and total computation time (excluding send/recv), and (ii) the communication volume, together with the portions attributable to key blocks (softmax, GELU, layer normalization, and truncation for time). We project online latency under a network with one-way bandwidth $\mathrm{BW}$ and round-trip latency $\mathrm{RTT}$ as
\begin{equation}
\label{eq:net_proj}
T_{\mathrm{proj}}
=
T_{\mathrm{comp}}
+
\frac{2 \cdot B_{\mathrm{comm}}}{\mathrm{BW}}
+
R \cdot \mathrm{RTT}.
\end{equation}
Here $T_{\mathrm{comp}}$ is the computation time excluding send/recv, estimated as
$T_{\mathrm{comp}} := T_{\mathrm{total}} - T_{\mathrm{send/recv}}$.

$B_{\mathrm{comm}}$ is the per-party online communication volume, as reported by the
Comm (GB) / Total comm (GB) columns in our tables.
We multiply by $2$ to account for both directions (send and recv) under a symmetric-link assumption.
$R$ is the number of interactive synchronization rounds in the online protocol. Our rounds count uses the actual number of rounds counted in the evaluation.

\begin{table}[H]
\centering
\scriptsize
\setlength{\tabcolsep}{3.5pt}
\renewcommand{\arraystretch}{1.05}
\caption{Seq=128 breakdown of online time.}
\label{tab:e2e_time_breakdown}
\begin{tabular}{l l
S[table-format=3.1] S[table-format=4.1]
S[table-format=3.1] S[table-format=3.1] S[table-format=3.1] S[table-format=4.1]}
\toprule
Model & Variant & {Comm time (ms)} & {Comp time (ms)} & {Softmax (ms)} & {GELU (ms)} & {LayerNorm (ms)} & {Truncate (ms)} \\
\midrule
BERT-tiny  & Sigma   & 6.3   & 57.6   & 16.4  & 4.2   & 14.2  & 5.1    \\
\rowcolor{FuseTeal!12} BERT-tiny  & FuseFSS & 5.6   & 37.0   & 9.8   & 1.6   & 7.6   & 4.5    \\
BERT-base  & Sigma   & 316.2 & 1297.5 & 356.1 & 211.9 & 145.8 & 193.6  \\
\rowcolor{FuseTeal!12} BERT-base  & FuseFSS & 129.0 & 1020.4 & 160.4 & 42.5  & 80.3  & 203.6  \\
BERT-large & Sigma   & 645.5 & 3388.9 & 785.8 & 516.6 & 309.4 & 601.0  \\
\rowcolor{FuseTeal!12} BERT-large & FuseFSS & 417.2 & 2580.7 & 399.3 & 118.8 & 205.2 & 504.4  \\
GPT-2      & Sigma   & 204.9 & 1219.0 & 228.5 & 197.0 & 151.8 & 221.8  \\
\rowcolor{FuseTeal!12} GPT-2      & FuseFSS & 119.7 & 953.0  & 101.8 & 40.4  & 83.5  & 181.4  \\
GPT-Neo    & Sigma   & 801.7 & 5524.5 & 442.8 & 921.5 & 518.6 & 1081.2 \\
\rowcolor{FuseTeal!12} GPT-Neo    & FuseFSS & 523.6 & 4592.2 & 252.7 & 232.8 & 323.2 & 1010.2 \\
\bottomrule
\end{tabular}
\end{table}

\begin{table}[H]
\centering
\scriptsize
\setlength{\tabcolsep}{3.5pt}
\renewcommand{\arraystretch}{1.05}
\caption{Seq=128 breakdown of online communication.}
\label{tab:e2e_comm_breakdown}
\begin{tabular}{l l
S[table-format=1.3] S[table-format=1.3] S[table-format=1.3] S[table-format=1.3]}
\toprule
Model & Variant & {Total comm (GB)} & {Softmax comm (GB)} & {GELU comm (GB)} & {LayerNorm comm (GB)} \\
\midrule
BERT-tiny  & Sigma   & 0.021 & 0.005 & 0.003 & 0.002 \\
\rowcolor{FuseTeal!12} BERT-tiny  & FuseFSS & 0.018 & 0.003 & 0.003 & 0.002 \\
BERT-base  & Sigma   & 1.062 & 0.278 & 0.171 & 0.119 \\
\rowcolor{FuseTeal!12} BERT-base  & FuseFSS & 0.891 & 0.149 & 0.129 & 0.119 \\
BERT-large & Sigma   & 2.833 & 0.742 & 0.456 & 0.319 \\
\rowcolor{FuseTeal!12} BERT-large & FuseFSS & 2.376 & 0.399 & 0.343 & 0.318 \\
GPT-2      & Sigma   & 0.885 & 0.141 & 0.171 & 0.119 \\
\rowcolor{FuseTeal!12} GPT-2      & FuseFSS & 0.777 & 0.076 & 0.129 & 0.119 \\
GPT-Neo    & Sigma   & 4.326 & 0.383 & 0.931 & 0.649 \\
\rowcolor{FuseTeal!12} GPT-Neo    & FuseFSS & 3.917 & 0.207 & 0.698 & 0.649 \\
\bottomrule
\end{tabular}
\end{table}

Section~\ref{sec:eval_attribution_scope} gives the per-substep softmax profile that isolates the compiled helper path from max-reduction.
The cache ablation in Table~\ref{tab:ablation_cache} accounts for only 9.486\,ms of initialization overhead on BERT-base, far below the 464.3\,ms end-to-end gain in Table~\ref{tab:e2e}; the dominant savings therefore come from the compiled nonlinear/helper path rather than program reuse alone.

\paragraph{Longer Contexts and Larger Models.}
To complement Table~\ref{tab:seq_sweep} (BERT-base), Table~\ref{tab:gpt2_seq_sweep} evaluates GPT-2 up to 512 tokens and Table~\ref{tab:llama_e2e} reports end-to-end LLaMA-family runs.

\begin{table}[H]
\centering
\small
\setlength{\tabcolsep}{6pt}
\renewcommand{\arraystretch}{1.08}
\caption{Sequence-length sweep for GPT-2 (batch size 1).}
\label{tab:gpt2_seq_sweep}
\begin{tabular}{S[table-format=3.0]
 S[table-format=4.1] >{\columncolor{FuseTeal!12}}S[table-format=4.1]   S[table-format=2.1]
S[table-format=1.3] >{\columncolor{FuseTeal!12}}S[table-format=1.3]}
\toprule
{Seq} & {Sigma time (ms)} & {FuseFSS time (ms)} & {$\downarrow$ (\%)} & {Sigma comm (GB)} & {FuseFSS comm (GB)} \\
\midrule
128 & 1423.90 & 1072.70 & 24.7 & 0.885 & 0.777 \\
256 & 2650.00 & 1795.10 & 32.3 & 2.129 & 1.786 \\
512 & 5630.10 & 3944.00 & 29.9 & 5.693 & 4.491 \\
\bottomrule
\end{tabular}
\end{table}

\begin{table}[H]
\centering
\scriptsize
\setlength{\tabcolsep}{4pt}
\renewcommand{\arraystretch}{1.05}
\caption{End-to-end LLaMA-family results. Key size reports total offline material; these runs use host-side key buffers and streaming.}
\label{tab:llama_e2e}
\begin{tabular}{l S[table-format=2.0] S[table-format=5.0] >{\columncolor{FuseTeal!12}}S[table-format=5.0] S[table-format=1.2] S[table-format=3.2] >{\columncolor{FuseTeal!12}}S[table-format=3.2] S[table-format=1.2] >{\columncolor{FuseTeal!12}}S[table-format=1.2] S[table-format=1.2]}
\toprule
Model & {Seq} & {Sigma (ms)} & {FuseFSS (ms)} & {Speedup} & {Sigma Key (GB)} & {FuseFSS Key (GB)} & {Sigma Keygen (s)} & {FuseFSS Keygen (s)} & {Keygen Speedup} \\
\midrule
LLaMA-7B & 16 & 6053 & 5567 & 1.09 & 71.63 & 68.95 & 4.31 & 3.93 & 1.10 \\
LLaMA-7B & 32 & 7651 & 6512 & 1.17 & 95.74 & 90.17 & 4.78 & 4.17 & 1.15 \\
LLaMA-7B & 64 & 10919 & 8927 & 1.22 & 146.10 & 134.19 & 6.38 & 5.09 & 1.25 \\
LLaMA-3.1-8B & 16 & 6517 & 6213 & 1.05 & 86.78 & 83.09 & 4.57 & 4.30 & 1.06 \\
LLaMA-3.1-8B & 32 & 8504 & 7683 & 1.11 & 116.31 & 108.74 & 5.47 & 5.06 & 1.08 \\
LLaMA-3.1-8B & 64 & 12478 & 10623 & 1.17 & 175.37 & 160.04 & 7.27 & 6.58 & 1.11 \\
\bottomrule
\end{tabular}
\end{table}

\subsection{Gate Microbench: Compiled Activations}
\label{sec:act_microbench}

To isolate the effect of FuseFSS on nonlinearities, we microbenchmark compiled activation gates at $L{=}128$.
Each gate corresponds to one activation invocation on its full tensor (e.g., GELU/SiLU on a $(128,d_{\mathrm{ff}})$ MLP intermediate).
For BERT-base and GPT-2, $d_{\mathrm{ff}}{=}3072$; for BERT-large, $d_{\mathrm{ff}}{=}4096$; and for LLaMA-7B SiLU, $d_{\mathrm{ff}}{=}11008$.
We report per-gate online evaluation time and online communication, together with the preprocessing key material size.
To ensure a fair comparison, FuseFSS compiles the same fixed-point operator specifications used by Sigma, so speedups do not come from using coarser approximations.

Figure~\ref{fig:act_microbench} shows that FuseFSS achieves consistent communication savings ($\approx$24\%) and substantial per-gate speedups (1.64--2.46$\times$).
FuseFSS also shrinks per-gate key material by 4.96--6.25$\times$.
These gate-level gains are larger than the end-to-end improvements in Table~\ref{tab:e2e},
because end-to-end inference is dominated by linear layers and attention whose costs are largely shared by both systems; Appendix~\ref{sec:e2e_breakdown} quantifies the fraction of end-to-end time/communication attributable to nonlinear blocks.

\begin{figure}[H]
\centering
\includegraphics[width=0.49\textwidth]{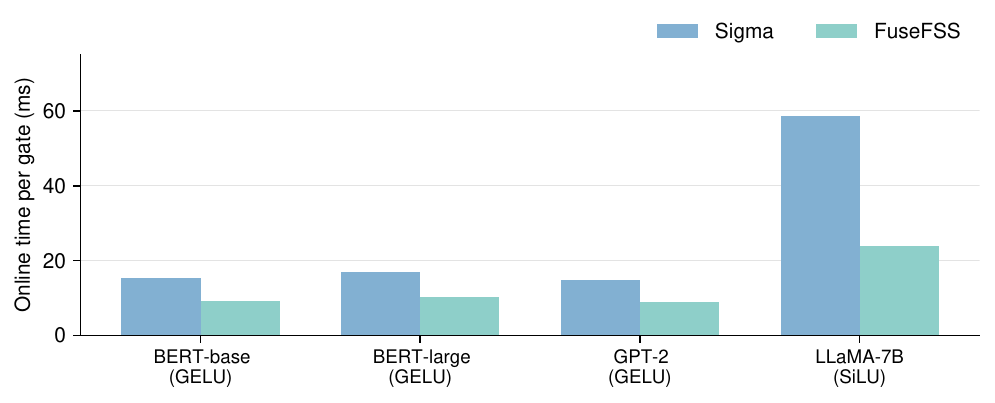}\hfill
\includegraphics[width=0.49\textwidth]{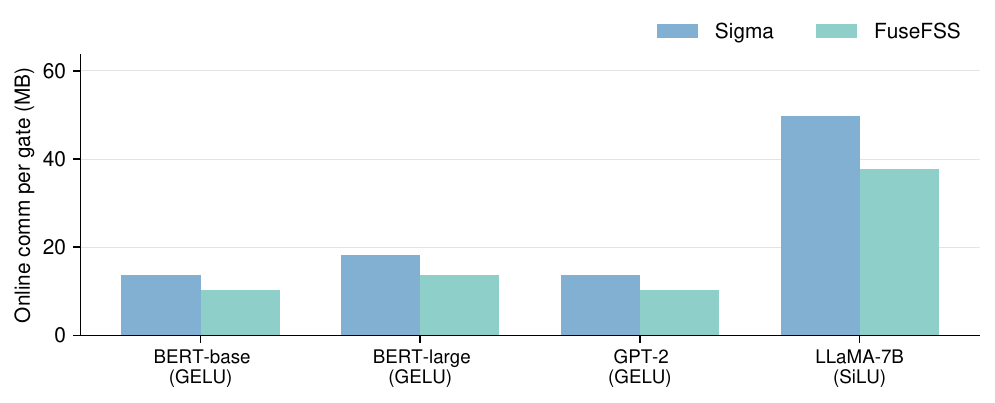}\\[-0.6em]
\includegraphics[width=0.49\textwidth]{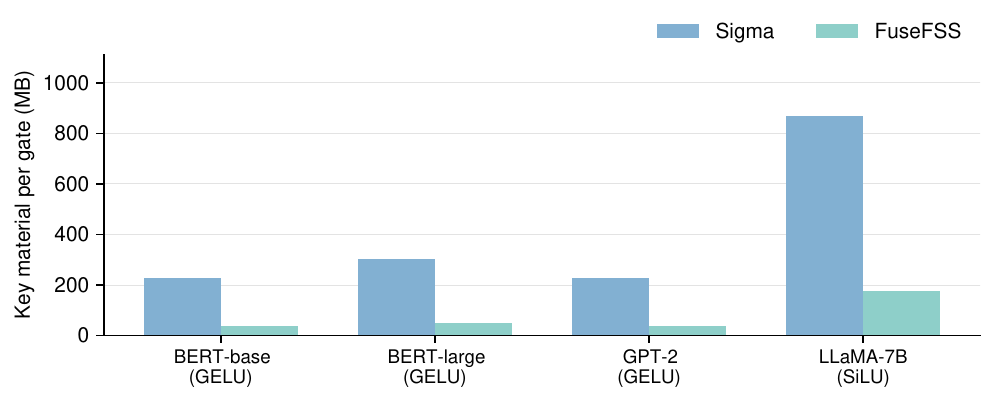}
\caption{Gate-level activation microbench at $L{=}128$.}
\label{fig:act_microbench}
\end{figure}

\subsection{Ablation: Cost of Mask-Independent Shapes and Reuse}
\label{sec:ablation}

We quantify two implementation choices required by FuseFSS's leakage model and runtime design:
(i) padding compiled backend instances to a mask-independent public shape, and
(ii) caching the compiled gate program across transformer layers.
All numbers in this subsection come from an internal gate microbenchmark that isolates program initialization (instantiation/dispatch setup of the compiled GPU program) and online gate evaluation time; it does not include end-to-end transformer components nor LAN/WAN projection. We benchmark a representative GELU compiled gate in a BERT-base setting.

\paragraph{Mask-Independent Public Shapes via Padding.}
Padding is required by our leakage model: otherwise, mask-dependent instance shapes (e.g., the number of emitted predicate queries) would reveal information about the secret mask and therefore about the secret input given the public masked opening~$\hat{x}$.
FuseFSS pads both the predicate query list and the translated lookup partition to fixed public shapes.
Table~\ref{tab:ablation_padding} quantifies the overhead on a representative BERT-base GELU gate.
Enforcing fixed shapes increases per-gate evaluation time from 0.264\,ms to 1.014\,ms,
and increases the compiled gate instance description size from 2{,}464\,B to 2{,}848\,B.
The LUT payload is unchanged; the increase comes from padding the predicate list (384\,B $\rightarrow$ 768\,B). The runtime overhead is larger than the key-size increase because evaluation scales with the number of predicate queries and becomes GPU memory/dispatch dominated for this small gate.

\begin{table}[H]
\centering
\small
\setlength{\tabcolsep}{6pt}
\renewcommand{\arraystretch}{1.05}
\caption{Padding overhead for mask-independent public shapes.}
\label{tab:ablation_padding}
\begin{tabular}{l r r r}
\toprule
Variant & Per-gate time (ms) & Instance size (B) & Pred / LUT (B) \\
\midrule
mask-dependent shape 
& 0.264 
& 2{,}464 
& 384 / 2{,}080 \\
\rowcolor{FuseTeal!12}
fixed-shape (FuseFSS) 
& 1.014 
& 2{,}848 
& 768 / 2{,}080 \\
\bottomrule
\end{tabular}
\end{table}

\paragraph{Program Reuse Across Layers.}
FuseFSS compiles each gate type once and reuses the resulting GPU program across layers.
This does not reuse preprocessing masks or keys: each layer still consumes fresh masks/keys.
Table~\ref{tab:ablation_cache} shows that instantiating the program per layer incurs 10.389\,ms total
initialization overhead on BERT-base (12 GELU calls), while the online evaluation time is unchanged.

\begin{table}[H]
\centering
\small
\setlength{\tabcolsep}{6pt}
\renewcommand{\arraystretch}{1.10}
\caption{Ablation: caching the compiled gate program amortizes initialization across BERT-base layers.}
\label{tab:ablation_cache}
\begin{tabular}{l r r r}
\toprule
Variant & \# prog.\ inits & Total init (ms) & Total eval (ms) \\
\midrule
\rowcolor{FuseTeal!12}
cached (reuse one program) & 1  & 0.903  & 12.161 \\
no cache (instantiate per layer) & 12 & 10.389 & 12.184 \\
\bottomrule
\end{tabular}
\end{table}

\subsection{Best-Effort Baseline Attempts: BOLT and BumbleBee}
\label{sec:baseline_attempts}

We mainly focus on Sigma because it matches our threat model and execution regime (GPU-accelerated FSS-based inference).
To increase transparency about other systems discussed in Section~\ref{sec:related}, we also attempted to run BOLT~\cite{pang2024bolt} and BumbleBee~\cite{lu2023bumblebee} in our environment and align the sequence length to 128 when possible.

These results are not directly comparable to FuseFSS/Sigma because (i) the protocol family and threat model differ (2PC+HE or SPU-based 2PC vs.\ two-server FSS preprocessing), and (ii) the available execution hardware differs (CPU-only vs.\ GPU).
We therefore report them only as best-effort outcomes and implementation notes.

BumbleBee provides a GPU path only when a CUDA-enabled JAX/SPU backend is available; in our setup, secure (SPU) GPU compilation was unstable, while plaintext JAX execution was functional.
BOLT is CPU-only in our setup and uses an HE-based protocol stack, so a like-for-like GPU comparison is not currently possible.

For SHAFT~\cite{kei2025shaft}, we treat it as an orthogonal system rather than a direct baseline. SHAFT contributes constant-round softmax and a GELU approximation inside a secret-sharing backend, whereas FuseFSS compiles compatible scalar nonlinear/helper operators in a two-server preprocessing regime. We attempted a GPT-2 SHAFT run in our environment, but its CUDA support was incompatible with our device, so we keep Sigma as the primary like-for-like GPU FSS baseline and do not claim a head-to-head SHAFT comparison.

\begin{table}[H]
\centering
\small
\setlength{\tabcolsep}{6pt}
\renewcommand{\arraystretch}{1.12}
\caption{Best-effort baseline attempts in our environment (batch size 1).}
\label{tab:baseline_attempts}
\begin{tabular}{l l l l r p{7.5cm}}
\toprule
System & Protocol family & Hardware & Model/task & Seq & Outcome \\
\midrule
BOLT & 2PC+HE & CPU-only & BERT & 128 & $295.44$\,s (P1) / $354.37$\,s (P2); $58.03$\,GB total comm. \\
BumbleBee & 2PC (SPU) & CPU-only & BERT & 128 & $289.84$\,s end-to-end; comm not exposed. \\
BumbleBee & 2PC (SPU) & CPU-only & GPT-2 & 128 & $287.31$\,s end-to-end; comm not exposed. \\
BumbleBee & plaintext & GPU & BERT & 128 & $3.29$\,s  end-to-end.\\
BumbleBee & plaintext & GPU & GPT-2 & 128 & $40.61$\,s  end-to-end.\\
BumbleBee & 2PC (SPU) & GPU & BERT/GPT-2 & 128 & failed during SPU compilation due to an XLA HLO importer overflow under our JAX/XLA stack. \\
\bottomrule
\end{tabular}
\end{table}

\FloatBarrier
\section{Masking Protocols}
\label{app:masking_protocols}

These are standard routines for revealing a masked value $\hat{x}=x+r_{\mathrm{in}}\bmod 2^n$ and for locally deriving additive shares of $x$ from a public $\hat{x}$ (Protocol~\ref{alg:shares_to_masked} and Protocol~\ref{alg:masked_to_shares}).

\begin{algorithm}[H]
\caption{Shares $\rightarrow$ masked opening}
\label{alg:shares_to_masked}
\begin{algorithmic}[1]
\Require $\share{x}=(x_0,x_1)$ and preprocessed $\share{r_{\mathrm{in}}}=(r_0,r_1)$.
\Ensure Public $\hat{x}=x+r_{\mathrm{in}}\bmod 2^n$.
\State Each $P_b$ computes $\hat{x}_b\gets x_b+r_b \bmod 2^n$.
\State Parties exchange $\hat{x}_0,\hat{x}_1$ and reconstruct $\hat{x}\gets \hat{x}_0+\hat{x}_1 \bmod 2^n$.
\end{algorithmic}
\end{algorithm}

\begin{algorithm}[H]
\caption{Masked $\rightarrow$ shares (local)}
\label{alg:masked_to_shares}
\begin{algorithmic}[1]
\Require Public $\hat{x}$ and $\share{r_{\mathrm{in}}}=(r_0,r_1)$.
\Ensure $\share{x}$ such that $x=\hat{x}-r_{\mathrm{in}}\bmod 2^n$.
\State $P_0$ sets $x_0\gets (\hat{x}-r_0)\bmod 2^n$; $P_1$ sets $x_1\gets (-r_1)\bmod 2^n$ (all in $R$).
\end{algorithmic}
\end{algorithm}

\FloatBarrier
\section{Interval Indicators and Boolean Normalization}
\label{app:indicator_lemmas}

The following lemmas justify our interval-indicator construction and Boolean normalization used in Section~\ref{sec:suf_compile}.

\paragraph{Interval Indicators Without AND Gates.}
Because the partition boundaries are strictly increasing in the canonical order, interval membership can be expressed using only XOR of two comparisons
(with the sentinel convention $\one[x<2^n]\equiv 1$).

\begin{lemma}[Interval indicator as XOR of comparisons]
\label{lem:interval_indicator_xor}
For boundaries $\alpha_i<\alpha_{i+1}$ and any $x\in R$, interpreted by its canonical representative in $\{0,\ldots,2^n-1\}$,
\[
 \one[x\in I_i] \;=\; \one[x<\alpha_{i+1}] \oplus \one[x<\alpha_i].
\]
\end{lemma}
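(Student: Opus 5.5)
The plan is a direct case analysis on the integer $x$ (its canonical representative) relative to the two boundaries $\alpha_i<\alpha_{i+1}$. Since the boundaries are strictly ordered, the integer line splits into exactly three disjoint regions: $x<\alpha_i$, $\alpha_i\le x<\alpha_{i+1}$, and $x\ge\alpha_{i+1}$. In each region I will evaluate both indicators on the right-hand side and check that their XOR agrees with $\one[x\in I_i]$, where $I_i=[\alpha_i,\alpha_{i+1})$.

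The three cases are as follows.
\begin{itemize}[nosep,leftmargin=*]
\item If $x<\alpha_i$, then also $x<\alpha_{i+1}$ by transitivity, so the right-hand side is $1\oplus1=0$. This matches $x\notin I_i$.
\item If $\alpha_i\le x<\alpha_{i+1}$, the right-hand side is $1\oplus 0=1$. This matches $x\in I_i$.
\item If $x\ge\alpha_{i+1}>\alpha_i$, the right-hand side is $0\oplus0=0$. This again matches $x\notin I_i$.
\end{itemize}
The key structural point is that the fourth combination ($x\ge\alpha_{i+1}$ but $x<\alpha_i$) is impossible by monotonicity. This is exactly why the XOR here equals $\one[x<\alpha_{i+1}]\wedge\neg\one[x<\alpha_i]$, and no AND gate is needed.

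The only place requiring care is the sentinel convention. When $\alpha_{i+1}=2^n$, the comparison $\one[x<2^n]$ is identically $1$ on canonical representatives, so the argument goes through with no wrap-around. Similarly, $\alpha_0=0$ gives $\one[x<0]\equiv0$. I will remark that all comparisons are taken on integers in $\{0,\dots,2^n-1\}$, not modulo $2^n$, so ordinary integer order applies throughout. I do not expect any genuine obstacle; the main thing to state explicitly is this use of canonical representatives, so the case split is exhaustive.
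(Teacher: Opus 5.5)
Your proof is correct and follows essentially the same route as the paper: monotonicity ($x<\alpha_i \Rightarrow x<\alpha_{i+1}$) rules out one of the four bit combinations, and checking the remaining three cases shows the XOR equals $1$ exactly on $[\alpha_i,\alpha_{i+1})$. Your explicit remarks about canonical representatives and the sentinels $\alpha_0=0$, $\alpha_m=2^n$ are a harmless addition; the paper handles these by convention.
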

\begin{proof}
If $x<\alpha_i$ then also $x<\alpha_{i+1}$, so the pair of bits $(\one[x<\alpha_i],\one[x<\alpha_{i+1}])$ can only be $(1,1)$, $(0,1)$, or $(0,0)$.
The XOR equals $1$ exactly in the middle case $x<\alpha_{i+1}$ and $x\ge \alpha_i$, i.e., $x\in[\alpha_i,\alpha_{i+1})$.
\end{proof}

\paragraph{Eliminating Piecewise Boolean Control Flow.}
Using Lemma~\ref{lem:interval_indicator_xor}, piecewise Boolean outputs can be normalized into a single global Boolean circuit without revealing the active interval.

\begin{lemma}[Boolean normalization by interval indicators]
\label{lem:boolean_normalize}
Let $(I_i)_{i=0}^{m-1}$ be a full partition as in Definition~\ref{def:suf}.
For each output bit index $j\in\{1,\ldots,\ell\}$, define the interval-indicator bit $J_i(x):=\one[x\in I_i]$ and
\[
 B^{(j)}(x)\;=\;\bigoplus_{i=0}^{m-1}\Big( J_i(x)\wedge B_i^{(j)}(x)\Big).
\]
Then $B^{(j)}(x)=B_{i^\star}^{(j)}(x)$ for the unique $i^\star$ with $x\in I_{i^\star}$.
\end{lemma}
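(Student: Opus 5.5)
The plan is to fix $x$ and let $i^\star$ be the index of the interval containing it. Lemma~\ref{lem:interval_indicator_xor} first tells us what each indicator evaluates to, and then the XOR collapses. Definition~\ref{def:suf} says the intervals $I_0,\dots,I_{m-1}$ form a full partition of $\{0,\dots,2^n-1\}$. They are pairwise disjoint because the boundaries are strictly increasing, and they cover everything because $\alpha_0=0$ and $\alpha_m=2^n$. So $i^\star$ exists and is unique. By Lemma~\ref{lem:interval_indicator_xor}, computed with the sentinel convention $\one[x<2^n]\equiv 1$ for the last interval, $J_i(x)=\one[x\in I_i]$ equals $1$ exactly when $i=i^\star$ and $0$ otherwise.

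Next, I would substitute these values into the defining XOR. For $i\neq i^\star$ the term $J_i(x)\wedge B_i^{(j)}(x)$ is $0\wedge B_i^{(j)}(x)=0$, whatever the formula $B_i^{(j)}$ evaluates to. Each $B_i^{(j)}$ is a Boolean formula over primitive predicates and so is a well-defined bit for every $x$, including the formulas belonging to inactive pieces. The $i^\star$ term is $1\wedge B_{i^\star}^{(j)}(x)=B_{i^\star}^{(j)}(x)$. In $\mathbb{Z}_2$, $0$ is the identity for $\oplus$, so the whole sum reduces to $B_{i^\star}^{(j)}(x)$. The same argument applies to every output index $j\in\{1,\dots,\ell\}$.

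There is no real obstacle; the only care needed is at the endpoints. I need to confirm that $I_0$ and $I_{m-1}$ are handled correctly by the sentinels: $\one[x<\alpha_0]=\one[x<0]\equiv 0$ and $\one[x<\alpha_m]\equiv 1$. I also need uniqueness of $i^\star$; otherwise two active terms could cancel under XOR. I would close with a remark: the normalized circuit computes the same bit without branching on $i^\star$, so under masking it can be evaluated on the XOR-shared indicators obtained from the rewrite after Lemma~\ref{lem:mask_rewrite_cmp}.
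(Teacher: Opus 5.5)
Your proof is correct and is the same argument the paper gives. Uniqueness of $i^\star$ from the full partition (the paper's Lemma~\ref{lem:indicator_partition}) means exactly one $J_i(x)$ equals $1$, so every other AND term vanishes and the XOR collapses to $B_{i^\star}^{(j)}(x)$. The detour through Lemma~\ref{lem:interval_indicator_xor} is harmless but not needed, because $J_i$ is defined directly as $\one[x\in I_i]$.
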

\begin{proof}
Exactly one indicator $\one[x\in I_i]$ equals $1$ and all others equal $0$.
Thus the XOR of the AND-masked pieces selects the active piece without revealing $i^\star$.
\end{proof}

\section{Low-Bit Predicate Rewrite}
\label{app:lowbit_rewrite}

The following lemma provides the masked rewrite identity for low-bit predicates used by truncation/ARS-style helpers.

\begin{lemma}[Masked rewrite for low-bit predicate]
\label{lem:mask_rewrite_lowbit}
Fix $f\in[n]$ and let $N_f=2^f$.
Fix an integer threshold $\gamma\in\{0,1,\ldots,N_f\}$ and a mask $r\in R$.
Let $r_f=\operatorname{rep}(r)\bmod N_f$ and $\hat{x}_f=\operatorname{rep}(\hat{x})\bmod N_f$,
let $\theta=(r_f+\gamma)\bmod N_f$,
and $w=\one[r_f+\gamma\ge N_f]$.
Assume preprocessing provides an XOR-sharing $\langle w\rangle$.
Then for all $x\in R$,
\[
\one[(x\bmod 2^f)<\gamma] \;=\; \one[\hat{x}_f<\theta]\ \oplus\ \one[\hat{x}_f<r_f]\ \oplus\ w,
\]
where comparisons are under the canonical order on $\{0,\ldots,N_f-1\}$.
\end{lemma}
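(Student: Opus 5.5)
The plan is to reduce this lemma to Lemma~\ref{lem:mask_rewrite_cmp}, applied in the smaller ring $\mathbb{Z}_{2^f}$. Reducing modulo $2^f$ commutes with addition on canonical representatives, since $2^f$ divides $2^n$. Therefore
\[
\hat{x}_f=\operatorname{rep}(\hat{x})\bmod 2^f=(\operatorname{rep}(x)+\operatorname{rep}(r))\bmod 2^f=(x_f+r_f)\bmod N_f,
\qquad x_f:=\operatorname{rep}(x)\bmod 2^f .
\]
So in $\mathbb{Z}_{N_f}$ the low-bit wire $x_f$ is masked by $r_f$ with public opening $\hat{x}_f$. The predicate $\one[(x\bmod 2^f)<\gamma]$ is exactly the unsigned comparison $\one[x_f<\gamma]$ in $\mathbb{Z}_{N_f}$. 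The threshold $\gamma$ lies in $\{0,\dots,N_f\}$, $\theta=(r_f+\gamma)\bmod N_f$, and $w$ is the carry of $r_f+\gamma$. These match the hypotheses of Lemma~\ref{lem:mask_rewrite_cmp} with $n$ replaced by $f$, $\beta$ by $\gamma$, and $r$ by $r_f$. Invoking that lemma gives the identity directly.

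Since Lemma~\ref{lem:mask_rewrite_cmp}'s proof is deferred, I would also check the underlying identity directly, for a modulus $N$, mask $r$ and threshold $\beta$, so the reduction is self-contained. Write $\hat{x}=x+r-N\cdot c$ with wrap bit $c=\one[x+r\ge N]$. Since $0\le x<N$, we have $c=1$ iff $\hat{x}<r$, so $c=\one[\hat{x}<r]$. Then split into cases on $w$.

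\textbf{Case $w=0$}, i.e.\ $r+\beta<N$, so $\theta=r+\beta$. If $c=0$ then $\hat{x}=x+r$, so $x<\beta$ iff $\hat{x}<\theta$. If $c=1$ then $\hat{x}<r\le\theta$, so $\one[\hat{x}<\theta]=1$; also $x\ge N-r>\beta$, so the left side is $0=1\oplus1\oplus0$.

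\textbf{Case $w=1$}, i.e.\ $\theta=r+\beta-N<r$ (including $\beta=N$, $\theta=r$). If $c=0$ then $\hat{x}\ge r>\theta$ when $\beta<N$, and $\hat{x}\ge r=\theta$ when $\beta=N$; either way $\one[\hat{x}<\theta]=0$. Also $x<N-r\le\beta$, so the left side is $1=0\oplus0\oplus1$. If $c=1$ then $\hat{x}=x+r-N$, so $x<\beta$ iff $\hat{x}<\theta$, giving $\one[\hat{x}<\theta]\oplus1\oplus1$.

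The main obstacle is the boundary bookkeeping at the sentinels $\gamma=0$ and $\gamma=N_f$, where the strict and non-strict inequalities flip. I would verify these two cases explicitly. For $\gamma=0$: $w=0$ and $\theta=r_f$, so the right side is $0$. For $\gamma=N_f$: $w=1$ and $\theta=r_f$, so the right side is $1$. Both match the left side. The remaining step is just the routine check that reduction mod $2^f$ is a ring homomorphism $\mathbb{Z}_{2^n}\to\mathbb{Z}_{2^f}$.
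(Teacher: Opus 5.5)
Your proposal is correct and follows the paper's own proof: you use $2^f\mid 2^n$ to get $\hat{x}_f=(x_f+r_f)\bmod N_f$, then instantiate Lemma~\ref{lem:mask_rewrite_cmp} in $\mathbb{Z}_{2^f}$ with $\beta\mapsto\gamma$ and $r\mapsto r_f$. Your supplementary self-contained check is also sound and handles the $\beta=N$ edge case properly. It identifies $\one[\hat{x}<r]$ with the wrap bit of $x+r$ and cases on $(w,c)$, whereas the paper proves Lemma~\ref{lem:mask_rewrite_cmp} by a cyclic-interval argument.
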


\begin{proof}
Let $x_f:=\operatorname{rep}(x)\bmod N_f$. Since $2^f\mid 2^n$, we have $\hat{x}_f=(x_f+r_f)\bmod N_f$.
The statement follows by applying the same carry-bit argument as Lemma~\ref{lem:mask_rewrite_cmp} over $[0,N_f)$, equivalently by instantiating Lemma~\ref{lem:mask_rewrite_cmp} in $\mathbb{Z}_{2^f}$.
\end{proof}

\FloatBarrier
\section{Full Compilation Protocol}
\label{app:compile_alg}

Protocol~\ref{alg:compile_tfss} gives high-level pseudocode for compiling a scalar gate instance into the two backend primitive instances.

\begin{algorithm}[H]
\caption{Compile a scalar gate instance to two backend primitive instances}
\label{alg:compile_tfss}
\footnotesize
\begin{algorithmic}[1]
\Require Operator specification $(\{\alpha_i\}_{i=0}^{m},\{P_i\}_{i=0}^{m-1},\{B_i\}_{i=0}^{m-1})$ and preprocessing mask $r_{\mathrm{in}}\in R$.
\Ensure Packed comparison instance $\Pi_{\mathrm{pred}}$ and interval lookup instance $\Pi_{\mathrm{coeff}}$.

\State $N \gets 2^n$
\Statex \hspace{\algorithmicindent}\Comment{Well-formed integer boundaries $0=\alpha_0<\cdots<\alpha_m=N$ imply $m\le N$.}
\State $M \gets \min(m+1, N)$ \Comment{if $m=N$ then $M=N$ (padding to $N{+}1$ is impossible)}
\State $r \gets \mathrm{rep}(r_{\mathrm{in}})\in\{0,\ldots,N-1\}$ \Comment{canonical representative}

\State \textbf{(1) Lookup partition with mask-independent shape (target $M$ intervals).}
\For{$i \gets 0$ \textbf{to} $m-1$}
  \State $s_i \gets (\alpha_i + r)\bmod N$ \Comment{translated start of original interval $I_i=[\alpha_i,\alpha_{i+1})$}
\EndFor
\State Let $\pi$ be a permutation such that $s_{\pi(0)} < s_{\pi(1)} < \cdots < s_{\pi(m-1)}$
\State $\beta_j \gets s_{\pi(j)}$ for $j=0,\ldots,m-1$ \Comment{sorted translated starts in $[0,N)$}

\State \textbf{if} $\beta_0 = 0$ \textbf{then}
\Comment{$0\in\{s_i\}$, i.e., the wrap point hits an existing boundary; translated partition has $m$ intervals}
\State \hspace{\algorithmicindent}$B \gets (0,\beta_1,\ldots,\beta_{m-1},N)$
\State \hspace{\algorithmicindent}$\mathrm{ord} \gets (\pi(0),\pi(1),\ldots,\pi(m-1))$
\State \hspace{\algorithmicindent}\textbf{if} $m < N$ \textbf{then}
\Comment{pad by splitting one standard interval to reach $M=m+1$}
\State \hspace{\algorithmicindent}\hspace{\algorithmicindent}Let $j^\star$ be the \emph{smallest} index such that $B_{j^\star+1}-B_{j^\star}\ge 2$
\Comment{exists since $m<N$}
\State \hspace{\algorithmicindent}\hspace{\algorithmicindent}$\delta \gets B_{j^\star}+1$
\Comment{any integer with $B_{j^\star}<\delta<B_{j^\star+1}$ works}
\State \hspace{\algorithmicindent}\hspace{\algorithmicindent}Insert $\delta$ into $B$ \emph{at position} $j^\star{+}1$
\Comment{split $[B_{j^\star},B_{j^\star+1})$}
\State \hspace{\algorithmicindent}\hspace{\algorithmicindent}Insert a copy of $\mathrm{ord}_{j^\star}$ into $\mathrm{ord}$ \emph{at position} $j^\star{+}1$
\Statex \hspace{\algorithmicindent}\hspace{\algorithmicindent}\Comment{duplicate payload so both sub-intervals return identical data}
\State \hspace{\algorithmicindent}\textbf{end if}
\State \textbf{else}
\Comment{$\beta_0 > 0$, so $0\notin\{s_i\}$: exactly one translated interval wraps and is split at $0$ (thus $m<N$ and $M=m+1$)}
\State \hspace{\algorithmicindent}$B \gets (0,\beta_0,\beta_1,\ldots,\beta_{m-1},N)$
\State \hspace{\algorithmicindent}$\mathrm{ord} \gets (\pi(m-1),\pi(0),\pi(1),\ldots,\pi(m-1))$
\Statex \hspace{\algorithmicindent}\Comment{$\pi(m-1)$ duplicated for the two pieces $[0,\beta_0)$ and $[\beta_{m-1},N)$}
\State \textbf{end if}
\Statex \hspace{\algorithmicindent}\Comment{Now $|B|=M{+}1$ and $|\mathrm{ord}|=M$.
The public shape $(M,p)$ is mask-independent; $(B,\mathrm{ord})$ may depend on $r$ but is embedded in secret keys.}

\State \textbf{(2) Boolean normalization.}
\State Normalize piecewise Boolean outputs via interval indicators (Lemma~\ref{lem:boolean_normalize}).

\State \textbf{(3) Mask rewrite.}
\State Rewrite all primitive predicates under masking using Lemmas~\ref{lem:mask_rewrite_cmp}--\ref{lem:mask_rewrite_lowbit} and MSB/signed reductions via fixed public shifts.
\Statex \hspace{\algorithmicindent}\Comment{Carry/wrap bits depending only on $r$ are provided as secret-shared instance constants.}
\Statex \hspace{\algorithmicindent}\Comment{Only descriptor-level sentinels are simplified: $C_0\equiv 0$, $C_N\equiv 1$, $D_{0,f}\equiv 0$, $D_{2^f,f}\equiv 1$.}

\State \textbf{(4) Fixed predicate query list.}
\State $\mathcal{Q} \gets$ collect all masked comparison atoms after rewriting in a fixed descriptor order.
\Statex \hspace{\algorithmicindent}\Comment{No mask-dependent deduplication/elimination (e.g., do not drop atoms because a mask-derived $\theta$ happens to be $0$).}
\Statex \hspace{\algorithmicindent}\Comment{Canonicalize to $\one[\mathsf{view}_{k,c}(\hat{x})<\theta]$ per backend interface.}

\State \textbf{(5) Build backend instances.}
\State Build $\Pi_{\mathrm{pred}}$ as a single packed comparison instance with query list $\mathcal{Q}$.
\For{$j \gets 0$ \textbf{to} $M-1$}
  \State Set payload $v_j \gets$ (coefficients/constants of $P_{\mathrm{ord}_j}$, padded to degree $d$ as needed)
\EndFor
\State Build $\Pi_{\mathrm{coeff}}$ as an interval lookup instance with boundaries $(B_0,\ldots,B_M)$ and payloads $(v_0,\ldots,v_{M-1})$.
\end{algorithmic}
\end{algorithm}
\FloatBarrier

%%%%%%%%%%%%%%%%%%%%%%%%%%%%%%%%%%%%%%%%%%%%%%%%%%%%%%%%%%%%%%%%%%%%%%%%%%%%%%
\section{Scope of Operator Specifications and Composition of Vector Blocks}
\label{app:scope}

This appendix complements Section~\ref{sec:suf_scope} by clarifying the exact function class covered by operator specifications and illustrating how vector-level transformer blocks are expressed by composing standard MPC reductions with elementwise specification-compatible scalar gates (Definition~\ref{def:suf_compatible}).

\paragraph{What Operator Specifications Cover.}
An operator specification (Definition~\ref{def:suf}) targets scalar maps on fixed-point words over $R=\mathbb{Z}_{2^n}$: the input is one ring element and the output is a constant-size tuple of ring elements and bits. This naturally includes elementwise nonlinear activations implemented by piecewise low-degree polynomial approximations (e.g., ReLU and spline-approximated GeLU/SiLU), as well as range-reduced polynomial blocks used in practice (e.g., $\mathrm{nExp}$).

Many scalar fixed-point helper operations (e.g., truncation/ARS and wrap-/round-aware corrections) are not literal polynomials in $R$. In our framework they are handled as specification-compatible scalar gates (Definition~\ref{def:suf_compatible}): the operator specification exposes the required predicate/helper bits and any piecewise polynomial components, while a fixed deterministic share-based post-processing circuit $\Phi$ implements the faithful fixed-point semantics using standard preprocessing (Beaver/AND/B2A/A2B).

\paragraph{What Operator Specifications Do Not Cover.}
Operator specifications are not intended as an IR for vector reductions such as $\max$ over a vector,
sorting, top-$k$, or attention sparsification with data-dependent routing.
These operations are not univariate scalar functions and typically require
interactive MPC subprotocols (e.g., comparison trees) whose structure depends
on vector length.
We treat them as separate, standard MPC components.

\paragraph{How Softmax / Layer Norm Are Handled.}
Vector blocks are expressed as compositions of:
(i) linear operations over additive shares (free additions and Beaver multiplications),
(ii) comparison-based reductions (e.g., max-reduction via a comparison tree),
and (iii) scalar specification-compatible gates (e.g., $\mathrm{nExp}$ and reciprocal / rsqrt).
For example, a standard fixed-point softmax pipeline can be written as:
\begin{enumerate}
 \item $m \leftarrow \max_i x_i$ (comparison tree; not covered by operator specifications).
 \item $x'_i \leftarrow x_i - m$ (linear).
 \item $e_i \leftarrow \mathrm{nExp}(x'_i)$ (specification-compatible scalar gate).
 \item $s \leftarrow \sum_i e_i$ (linear).
 \item $t \leftarrow \mathrm{Recip}(s)$ (specification-compatible scalar gate).
 \item $y_i \leftarrow e_i \cdot t$ (Beaver multiplications).
\end{enumerate}
LayerNorm is similar: mean/variance reductions are linear,
while reciprocal-square-root is a scalar specification-compatible gate.

Our FuseFSS targets the scalar nonlinear operators and helper operations that dominate fixed-point
secure transformer inference cost (either directly as operator specifications or via
specification-compatible gates with fixed post-processing), while vector reductions are
handled by standard MPC subprotocols and composed with scalar gates.

%%%%%%%%%%%%%%%%%%%%%%%%%%%%%%%%%%%%%%%%%%%%%%%%%%%%%%%%%%%%%%%%%%%%%%%%%%%%%%
\section{Typing Discipline and Typed Specification Language}
\label{app:typed_suf}

This appendix consolidates the typing discipline and the minimal typed expression language
used for operator specifications and specification-compatible post-processing.

\paragraph{Base Types and Sharing Domains.}
Fix $n\ge 1$ and let $R=\mathbb{Z}_{2^n}$. We use two base value types:
\[
 \mathsf{A}_n := R=\mathbb{Z}_{2^n}, \qquad \mathsf{B} := \{0,1\}.
\]
Arithmetic wires have type $\mathsf{A}_n$ and are represented as additive shares
$\share{x}=(x_0,x_1)$ with $x=x_0+x_1 \bmod 2^n$.
Bit wires have type $\mathsf{B}$ and are represented as XOR shares
$\langle b\rangle=(b_0,b_1)$ with $b=b_0\oplus b_1$.

\paragraph{Embedding.}
We use the canonical embedding $\iota:\mathsf{B}\hookrightarrow \mathsf{A}_n$ that maps
$0,1\in\mathsf{B}$ to the corresponding ring elements in $R$.

\paragraph{Primitive Predicates and Typing.}
Primitive predicates are typed maps $\mathsf{A}_n\to\mathsf{B}$:
\[
 \begin{aligned}
 C_\beta(x) &= \one[x<\beta] : \mathsf{A}_n\to\mathsf{B}, \\
 D_{\gamma,f}(x) &= \one[(x\bmod 2^f)<\gamma] : \mathsf{A}_n\to\mathsf{B}, \\
 \msb(x+c) &:\mathsf{A}_n\to\mathsf{B},
 \end{aligned}
\]
where $\beta\in\{0,\ldots,2^n\}$, $f\in\{1,\ldots,n\}$, $\gamma\in\{0,\ldots,2^f\}$, and $c\in R$ is a public constant.
Boolean connectives $\neg,\wedge,\vee,\oplus$ are operations on $\mathsf{B}$ (with $\oplus$ denoting XOR).

\paragraph{Mixed-Domain Conversions.}
Whenever a bit gates arithmetic computation, we use a standard secure conversion
\[
 \mathsf{B2A}:\langle b\rangle \mapsto \share{\iota(b)} \in \mathsf{A}_n,
\]
and whenever an arithmetic (secret-shared) value must be used as a bit, we use
\[
 \mathsf{A2B}:\share{x} \mapsto \langle b\rangle \quad\text{for a suitable extracted bit } b.
\]
Both are treated as standard preprocessing-based secure subprotocols; their invocations are
counted in complexity statements.

\paragraph{Operator Specification Signature.}
A typed operator specification has signature
\[
 F:\mathsf{A}_n \to \mathsf{A}_n^r \times \mathsf{B}^\ell,
\]
optionally annotated with fixed-point metadata (fractional bits, signedness), which determines
which primitive predicates and correction bits are required by faithful fixed-point semantics.
The semantic definition of an operator specification (partition, per-interval polynomials and Boolean formulas)
is given in Definition~\ref{def:suf}.

\paragraph{Post-Processing Circuits and Instance Constants.}
A specification-compatible scalar gate $G$ (Definition~\ref{def:suf_compatible}) is implemented as
\[
G(x)=\Phi\big(F(x),\kappa,x,\hat{x},\mathsf{pub}\big),
\]
where $\hat{x}=(x+r_{\mathrm{in}})\bmod 2^n$ is the public masked input and
$\kappa=\kappa(r_{\mathrm{in}})$ denotes mask-derived instance constants (kept secret-shared).
To make typing explicit, we view
\[
 \kappa \in \mathsf{K} := \mathsf{A}_n^{a} \times \mathsf{B}^{b},
\]
i.e., $\kappa=(\kappa^{\mathsf{A}},\kappa^{\mathsf{B}})$ consists of $a$ ring elements (additively shared) and $b$ bits (XOR-shared).
The public parameter bundle $\mathsf{pub}$ may include bit-widths, scaling metadata, and approximation parameters.

\subsection*{A Minimal Typed Expression Language}

\paragraph{Bit Expressions for Operator Specifications.}
Within an operator specification, each per-interval Boolean output is a bit expression over the input $x$:
\[
 \begin{aligned}
 b_{\mathrm{spec}} ::= {}&
 0 \mid 1 \mid C_\beta(x) \mid D_{\gamma,f}(x) \mid \msb(x+c) \\
 &\mid \neg b_{\mathrm{spec}}
 \mid (b_{\mathrm{spec}}\oplus b_{\mathrm{spec}})
 \mid (b_{\mathrm{spec}}\wedge b_{\mathrm{spec}})
 \mid (b_{\mathrm{spec}}\vee b_{\mathrm{spec}}).
 \end{aligned}
\]
All such expressions have type $\mathsf{B}$.

\paragraph{Polynomial Expressions for Arithmetic Outputs.}
Each per-interval arithmetic output is a ring polynomial in the single variable $x$ (evaluated in $R$):
\[
 p ::= c \mid x \mid p+p \mid p-p \mid p\cdot p,
\]
where $c\in R$ is a descriptor constant. Each $p$ has type $\mathsf{A}_n$.

\paragraph{Post-Processing Expressions.}
The post-processing circuit $\Phi$ may compute both arithmetic outputs and Boolean outputs from:
(i) the arithmetic outputs $y_j\in\mathsf{A}_n$ and Boolean outputs $z_j\in\mathsf{B}$ of $F(x)$,
(ii) mask-derived instance constants $\kappa=(\kappa^{\mathsf{A}},\kappa^{\mathsf{B}})$,
(iii) additive shares of the unmasked input $x$ (e.g., derived locally from $(\hat{x},\share{r_{\mathrm{in}}})$ via Protocol~\ref{alg:masked_to_shares}), and
(iv) public values including $\hat{x}$.

We use two mutually-typed expression grammars:

\emph{Bit expressions for $\Phi$} (type $\mathsf{B}$):
\[
 \begin{aligned}
 b ::= {}&
 0 \mid 1 \mid z_j \mid \kappa^{\mathsf{B}}_t \mid \mathsf{A2B}(e) \\
 &\mid \neg b
 \mid (b\oplus b)
 \mid (b\wedge b)
 \mid (b\vee b).
 \end{aligned}
\]

\emph{Arithmetic expressions for $\Phi$} (type $\mathsf{A}_n$):
\[
 \begin{aligned}
 e ::= {}&
 c \mid x \mid y_j \mid \kappa^{\mathsf{A}}_t \mid \hat{x} \\
 &\mid e+e \mid e-e \mid e\cdot e \mid \mathsf{B2A}(b),
 \end{aligned}
\]
where $c\in R$ may also include public constants derived from $\mathsf{pub}$.

\paragraph{Well-Formedness.}
An operator specification is well-formed if:
(i) its boundaries are strictly increasing integers,
(ii) each per-interval arithmetic output is a polynomial expression in $R[x]$, and
(iii) each Boolean output is a well-typed bit expression built from primitive predicates and connectives.

A specification-compatible gate $(F,\Phi)$ is well-formed if:
(i) $\Phi$ is fixed (descriptor-determined) and independent of the sampled mask beyond its access to $(\kappa,x,\hat{x},\mathsf{pub})$ as inputs/parameters, and
(ii) any use of a bit in arithmetic inside $\Phi$ occurs only via an explicit $\mathsf{B2A}(\cdot)$ node, and any extraction of a bit from a secret-shared ring value occurs only via an explicit $\mathsf{A2B}(\cdot)$ node.

%%%%%%%%%%%%%%%%%%%%%%%%%%%%%%%%%%%%%%%%%%%%%%%%%%%%%%%%%%%%%%%%%%%%%%%%%%%%%%
\section{Backend Interface}
\label{app:tfss_interface}

This appendix formalizes a minimal interface sufficient for the compiler,
while keeping the assumptions ``implementation-realistic''.

\subsection{A Multi-View Packed Predicate Primitive}
\label{app:tfss_pred}

Section~\ref{sec:tfss_prelim} defines packed comparison as a single primitive family
that evaluates a list of comparisons on the same public masked input
$\hat{x}\in R$, where each query may use a different bit-width $k_t$ and an
optional public shift $c_t$ through the public view operator
$\mathsf{view}_{k,c}(\cdot)$.
This appendix restates that interface as a multi-view packed predicate
primitive to make explicit that full-width comparisons, low-bit predicates, and
shifted/signed predicates can all be handled within a single non-interactive
primitive evaluation per gate instance.

\begin{definition}[Multi-view packed predicate primitive]
\label{def:tfss_pred}
Fix $n\ge 1$ and $R=\mathbb{Z}_{2^n}$.
A multi-view packed predicate primitive instance is specified by a list
\[
 \mathcal{Q}=\big((k_t, c_t, \theta_t)\big)_{t=1}^{T},
\]
where $1\le k_t\le n$, $c_t\in\mathbb{Z}_{2^{k_t}}$ is a public constant,
and $\theta_t\in\mathbb{Z}_{2^{k_t}}$ is an instance parameter (possibly derived
from preprocessing-time secrets such as masks).
On input $\hat{x}\in R$, define the public view
\[
 \mathsf{view}_{k,c}(\hat{x}) := \big((\hat{x}\bmod 2^k) + c\big)\bmod 2^k
 \in \mathbb{Z}_{2^k},
\]
and define the output bits
\[
 \mathrm{Pred}_{\mathcal{Q}}(\hat{x}) :=
 \Big(\one[\mathsf{view}_{k_t,c_t}(\hat{x}) < \theta_t]\Big)_{t=1}^T \in \{0,1\}^T.
\]
A backend provides protocols $(\mathsf{Gen},\mathsf{Eval}_0,\mathsf{Eval}_1)$
such that $\mathsf{Gen}(1^\lambda,\mathcal{Q})\to(k_0,k_1)$ and on public $\hat{x}$,
each party outputs XOR-shares
$\mathsf{Eval}_b(k_b,\hat{x})\in\{0,1\}^T$ whose XOR reconstructs to
$\mathrm{Pred}_{\mathcal{Q}}(\hat{x})$.
\end{definition}

Definition~\ref{def:tfss_pred} is an interface-level abstraction.
It can be instantiated by existing DCF/DPF-style implementations by generating one
comparison key per query $(k_t,c_t,\theta_t)$ and concatenating/batching keys.
Each party can locally compute the public views $\mathsf{view}_{k_t,c_t}(\hat{x})$
and evaluate all comparison keys in a single batched routine (e.g., one GPU kernel launch).
The security proofs in this paper only rely on standard single-key privacy of
FSS primitives (with public shape leakage).

\subsection{Interval Lookup with Vector Payload}
\label{app:tfss_lookup}

We restate the interval-lookup primitive: given boundaries and payload vectors,
on input $\hat{x}\in R$ output additive shares of the payload for the unique
interval containing $\hat{x}$.

\begin{definition}[Interval lookup with vector payload]
\label{def:tfss_lookup}
Fix $n\ge 1$ and $R=\mathbb{Z}_{2^n}$. An interval-lookup instance is specified by boundaries
$0=\alpha_0<\cdots<\alpha_M=2^n$ and payload vectors $v_i\in R^p$ for $i\in\{0,\ldots,M-1\}$.
On input $\hat{x}\in R$, let $i^\star$ be the unique index such that $\operatorname{rep}(\hat{x})\in[\alpha_{i^\star},\alpha_{i^\star+1})$
and define $\mathrm{LUT}(\hat{x}):=v_{i^\star}\in R^p$.
A backend provides protocols $(\mathsf{Gen},\mathsf{Eval}_0,\mathsf{Eval}_1)$ such that $\mathsf{Gen}(1^\lambda,(\{\alpha_i\},\{v_i\}))\to(k_0,k_1)$ and,
on public $\hat{x}$, each party outputs additive shares $\mathsf{Eval}_b(k_b,\hat{x})\in R^p$ that sum (in $R^p$) to $\mathrm{LUT}(\hat{x})$.
\end{definition}

\paragraph{Fixed Shape Requirement.}
For our real/ideal security statement with a clean leakage function,
we require that public shape parameters, the number of predicate outputs $T$,
the number of intervals, and the payload dimension $p$, depend only on the
operator specification (public) and not on preprocessing-time secrets (e.g.,
masks). This avoids ``mask-dependent key length'' leakage.
In practice, this can be ensured by:
(i) disabling deduplication that might depend on mask-derived thresholds,
and (ii) padding to a fixed worst-case interval count $M:=\min(m{+}1,2^n)$ (duplicating payloads only when $m<2^n$ and a split does not actually occur).
We formalize this below.

\begin{proposition}[Mask-independent shape via padding]
\label{prop:shape_padding}
Fix an operator specification with $m$ intervals and fixed predicate grammar size.
There exists a compiler convention such that the emitted backend interface instances
have public shape parameters $(T,p,M)$ that depend only on the operator specification
(and fixed-point metadata), where $T$ is the number of predicate outputs,
$p$ is the payload dimension, and $M$ is the number of lookup intervals,
and are independent of the sampled mask $r_{\mathrm{in}}$.
\end{proposition}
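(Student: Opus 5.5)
The convention is the one already built into Protocol~\ref{alg:compile_tfss}. The plan is to show separately that each of the three shape parameters $(T,p,M)$ is computed from descriptor data alone.

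\textbf{Payload dimension $p$.} This is immediate. Every payload vector $v_j$ consists of the coefficients of $P_{\mathrm{ord}_j}$, padded to degree $d$, plus the $p_{\mathrm{aux}}$ per-interval constants. Hence $p=r(d+1)+p_{\mathrm{aux}}$, and $r,d,p_{\mathrm{aux}}$ are all fields of the public descriptor.

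\textbf{Predicate count $T$.} The query list $\mathcal{Q}$ is produced in a fixed descriptor order, one rewrite per primitive predicate occurrence in the normalized Boolean circuit of Lemma~\ref{lem:boolean_normalize}.
\begin{itemize}
\item Each $C_\beta$ with $\beta\notin\{0,2^n\}$ is rewritten by Lemma~\ref{lem:mask_rewrite_cmp} into exactly two comparison atoms: $\one[\hat{x}<\theta]$ and $\one[\hat{x}<r]$.
\item Each $D_{\gamma,f}$ with $\gamma\notin\{0,2^f\}$ is rewritten by Lemma~\ref{lem:mask_rewrite_lowbit} into exactly two atoms at bit-width $k=f$.
\item MSB and signed tests are rewritten through a fixed public shift $c$, giving two atoms at width $n$.
\end{itemize}
The carry bits $w$ become entries of $\kappa$, not queries. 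Sentinel simplification depends only on the descriptor constants $\beta,\gamma$, never on $r$. Step (4) of the protocol forbids mask-dependent deduplication, so an atom is kept even when $\theta=0$ or $\theta=r$ and is evaluated as an ordinary comparison. Consequently $T$, and the multiset $\{k_t\}$, is the sum of these fixed per-predicate counts, a function of the descriptor only.

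\textbf{Interval count $M$.} This is the main step; I would argue it by cases on step (1) of Protocol~\ref{alg:compile_tfss}. The translated starts $s_i=(\alpha_i+r)\bmod N$ are distinct, since translation is a bijection on $\mathbb{Z}_N$. By Lemma~\ref{lem:interval_translate}, the translated partition of $[0,N)$ has either $m$ standard intervals, when $0\in\{s_i\}$, or exactly $m+1$, when a single interval wraps and is split at $0$.
\begin{itemize}
\item \emph{Wrap case ($\beta_0>0$).} Here $0\notin\{s_i\}$, so $m<N$, because $m=N$ would force $\{s_i\}=\mathbb{Z}_N\ni 0$. The list $B$ has $m+2$ entries, giving $m+1=M$ intervals.
\item \emph{No-wrap case ($\beta_0=0$).} If $m<N$, the $m$ intervals have total length $N>m$, so some gap $B_{j^\star+1}-B_{j^\star}\ge 2$ exists. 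Splitting it at $B_{j^\star}+1$ and duplicating the payload yields $m+1=M$ intervals. If $m=N$, every interval is a singleton, no split is possible, and $M=\min(m+1,N)=N=m$.
\end{itemize}
In all cases $|B|=M+1$ with $M=\min(m+1,N)$, which is descriptor-determined.

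For completeness I would also note that padding preserves correctness. The duplicated payload makes both halves of a split interval return identical data, so $\mathrm{LUT}(\hat{x})$ is unchanged. The point needing the most care is the pigeonhole existence of a splittable interval, together with the boundary case $m=N$. The precise values of $B$ and $\mathrm{ord}$ remain mask-dependent, but they are embedded only in the keys, so they do not affect the public shape.
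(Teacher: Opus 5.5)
Your proposal is correct and follows essentially the same route as the paper's proof: $p=r(d+1)+p_{\mathrm{aux}}$ from the descriptor, $T$ from a fixed two-atom rewrite per primitive predicate with no mask-dependent deduplication, and $M=\min(m+1,N)$ via the same $\beta_0>0$ / $\beta_0=0$ case split, using a dummy split point and the $m=N$ boundary case. Your pigeonhole argument for the existence of a splittable interval, and your argument that $m<N$ in the wrap case, supply details the paper only asserts.
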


\begin{proof}
The number of primitive predicates required by the compiler (before masking)
is determined syntactically by the operator specification and metadata.
Each such predicate is rewritten into a fixed number of masked comparison
queries (two comparisons per Lemmas~\ref{lem:mask_rewrite_cmp} and
\ref{lem:mask_rewrite_lowbit}), plus a constant number of MSB/signed-related
comparisons. By not performing any deduplication that depends on mask-derived
threshold values, we obtain a fixed $T$.

For interval lookup, Lemma~\ref{lem:interval_translate} shows the translated
partition uses at most $m+1$ standard intervals, and with integer boundaries
it cannot exceed $N=2^n$ intervals. We therefore fix the public lookup
interval count as $M := \min(m+1, N)$.

Let $\beta_0$ denote the smallest translated start (Protocol~\ref{alg:compile_tfss}).
If $\beta_0>0$ (equivalently $0\notin\{s_i\}$), then exactly one translated interval
wraps and the translated partition has $m+1$ standard intervals; we include $0$ as
a boundary and duplicate the wrapped interval's payload on both sides of $0$.
If $\beta_0=0$ (equivalently $0\in\{s_i\}$), then no interval wraps and the translated
partition has only $m$ standard intervals. If additionally $m<N$ (so $M=m+1$), then
some standard interval has length at least $2$; we insert a dummy split point
$\delta$ strictly inside such an interval and duplicate its payload, yielding $M$
intervals. If $m=N$, then $M=N$ and no padding is needed (and padding is impossible).
Thus $M$ depends only on the public operator specification and $n$, and is independent
of the sampled mask.

Finally, the payload dimension $p$ is fixed by the operator specification and
metadata: for $r$ arithmetic outputs and global degree bound $d$, the coefficient
payload contributes $r(d+1)$ ring elements with descriptor-level padding of
lower-degree polynomials, plus a fixed number $p_{\mathrm{aux}}$ of per-interval
auxiliary constants required by $\Phi$. Hence $p$ is descriptor-determined and
mask-independent.
\end{proof}

%%%%%%%%%%%%%%%%%%%%%%%%%%%%%%%%%%%%%%%%%%%%%%%%%%%%%%%%%%%%%%%%%%%%%%%%%%%%%%
\section{Proofs for Boolean Normalization and Mask Rewriting}
\label{app:rewrite_proofs}

\subsection{Proof of Lemma~\ref{lem:boolean_normalize}}

We restate the key fact that interval indicators form a partition.

\begin{lemma}[Interval indicators form a partition]
\label{lem:indicator_partition}
Let $0=\alpha_0<\cdots<\alpha_m=2^n$ and define
\[
J_i(x)=\one[x<\alpha_{i+1}] \oplus \one[x<\alpha_i]\quad\text{for }i\in\{0,\ldots,m-1\},
\]
with the sentinel convention $\one[x<2^n]\equiv 1$.
Then for every $x\in\{0,\ldots,2^n-1\}$, exactly one $J_i(x)=1$ and all others are $0$.
\end{lemma}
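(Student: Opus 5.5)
The plan is a direct counting argument on the canonical representative $x\in\{0,\ldots,2^n-1\}$, reusing Lemma~\ref{lem:interval_indicator_xor} to identify each $J_i$ with the interval membership indicator $\one[x\in I_i]$. Under the sentinel convention $\one[x<\alpha_m]=\one[x<2^n]\equiv 1$, that lemma applies verbatim for $i=m-1$. For every $i$ we therefore have $J_i(x)=\one[\alpha_i\le x<\alpha_{i+1}]$.

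\textbf{Existence.} The boundaries are strictly increasing integers with $\alpha_0=0\le x<2^n=\alpha_m$. Hence the set $\{t:\alpha_t\le x\}$ is nonempty (it contains $0$) and does not contain $m$. Let $i^\star$ be its maximum. Then $\alpha_{i^\star}\le x<\alpha_{i^\star+1}$, so $J_{i^\star}(x)=1$.

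\textbf{Uniqueness.} Suppose $J_i(x)=J_j(x)=1$ with $i<j$. Then $i+1\le j$, and monotonicity of the boundaries gives $x<\alpha_{i+1}\le\alpha_j\le x$, a contradiction. So all other $J_j(x)$ are $0$.

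\textbf{Alternative telescoping check.} Summing $J_i(x)$ over $\mathbb{Z}$ gives $\sum_i(\one[x<\alpha_{i+1}]-\one[x<\alpha_i])=1-0=1$. Here each XOR equals an integer difference because the bits are monotone in $i$, by the case analysis in Lemma~\ref{lem:interval_indicator_xor}. Since the $J_i$ are nonnegative, exactly one of them is $1$.

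The only delicate point is the last interval $I_{m-1}$, where the comparison $\one[x<2^n]$ is never emitted and must be read through the sentinel convention. Once that convention is fixed, the argument is routine. The lemma then immediately gives Lemma~\ref{lem:boolean_normalize}: exactly one summand $J_i\wedge B_i^{(j)}$ survives in the XOR.
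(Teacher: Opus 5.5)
Your proof is correct and takes essentially the same approach as the paper: locate the unique $i^\star$ with $\alpha_{i^\star}\le x<\alpha_{i^\star+1}$ and show every other $J_j$ vanishes. The paper does the second step directly, splitting into $j<i^\star$ (both bits $0$) and $j>i^\star$ (both bits $1$), whereas you go through Lemma~\ref{lem:interval_indicator_xor} and a contradiction; your telescoping variant is a valid extra check but not needed.
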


\begin{proof}
By definition of the partition, there exists a unique $i^\star$ such that
$\alpha_{i^\star}\le x < \alpha_{i^\star+1}$.
Equivalently, $\one[x<\alpha_{i^\star}]=0$ and $\one[x<\alpha_{i^\star+1}]=1$, hence $J_{i^\star}(x)=1$.
For $j<i^\star$, we have $x\ge \alpha_{j+1}$ so $\one[x<\alpha_{j+1}]=0$ and $J_j(x)=0$.
For $j>i^\star$, we have $x<\alpha_j$ so $\one[x<\alpha_j]=1$ and also $\one[x<\alpha_{j+1}]=1$, hence $J_j(x)=0$.
\end{proof}

Lemma~\ref{lem:boolean_normalize} follows immediately: since exactly one indicator is $1$,
the XOR of the AND-masked pieces selects the active piece.

\subsection{Proof of Lemma~\ref{lem:mask_rewrite_cmp}}
\label{app:proof_mask_rewrite_cmp}
\begin{proof}
Let $N=2^n$ and interpret $x,r\in R$ by their canonical representatives in $\{0,\ldots,N-1\}$.
Define the masked value $\hat{x}=(x+r)\bmod N$ and the (cyclic) interval
\[
J := [r,r+\beta)\bmod N \subseteq \mathbb{Z}_N.
\]
Since addition by $r$ is a rotation (a bijection) on $\mathbb{Z}_N$, we have
$x<\beta$ if and only if $\hat{x}\in J$.
Write $\theta=(r+\beta)\bmod N$ and $w=\one[r+\beta\ge N]$.
If $w=0$ then $\theta=r+\beta\ge r$ and $J=[r,\theta)$ is a standard interval.
For any $u\in\{0,\ldots,N-1\}$,
\[
\ind_{\left[u\in [r,\theta)\right]} \;=\; \one[u<\theta]\oplus \one[u<r],
\]
by a direct check of the three ranges $u<r$, $r\le u<\theta$, and $u\ge \theta$.
If $w=1$ then $\theta=r+\beta-N$ (so $\theta<r$ unless $\beta=N$) and $J=[r,N)\cup[0,\theta)$.
For any $u\in\{0,\ldots,N-1\}$,
\[
\one[u\in J] \;=\; \one[u<\theta]\oplus \one[u<r]\oplus 1,
\]
by a direct check of the three ranges $u<\theta$, $\theta\le u<r$, and $u\ge r$.
Substituting $u=\hat{x}$ gives
$\one[x<\beta]=\one[\hat{x}<\theta]\oplus \one[\hat{x}<r]\oplus w$ as claimed.
\end{proof}

\subsection{Signed Comparisons and $\msb(x+c)$ Under Masking}
\label{app:signed_msb_rewrites}
We treat signed values in two's complement: for $N=2^n$, the signed order corresponds to the cyclic order on $\mathbb{Z}_N$ starting at $2^{n-1}$.
Let $s:=2^{n-1}$ and define the public rotation operator $\mathsf{view}_{n,s}(u):=(u+s)\bmod N$.
\paragraph{Signed Comparison as an Unsigned Comparison After a Fixed Shift.}
Let $\beta\in\{0,\ldots,N-1\}$ be a two's-complement signed threshold represented in $R$.
Define $\beta':=(\beta+s)\bmod N$. Then
\[
\one[x<_{\mathrm{signed}}\beta]=\one[\mathsf{view}_{n,s}(x)<\beta'].
\]
Now suppose $\hat{x}=(x+r)\bmod N$ is the public masked opening.
Since $\mathsf{view}_{n,s}(\hat{x})=(\mathsf{view}_{n,s}(x)+r)\bmod N$, applying Lemma~\ref{lem:mask_rewrite_cmp} yields
\[
\one[x<_{\mathrm{signed}}\beta]
=
\one[\mathsf{view}_{n,s}(\hat{x})<\theta]\oplus \one[\mathsf{view}_{n,s}(\hat{x})<r]\oplus w,
\]
where $\theta=(r+\beta')\bmod N$ and $w=\one[r+\beta'\ge N]$.
\paragraph{$\msb(x+c)$ as a Masked Unsigned Threshold Test.}
For any constant $c\in R$,
\[
\msb(x+c)=\neg\,\one[(x+c)\bmod N < s].
\]
Let $x_c:=\mathsf{view}_{n,c}(x)=(x+c)\bmod N$ and $\hat{x}_c:=\mathsf{view}_{n,c}(\hat{x})$.
Applying Lemma~\ref{lem:mask_rewrite_cmp} to $\one[x_c<s]$ gives
\[
\one[x_c<s]
=
\one[\hat{x}_c<\theta_{1/2}]\oplus \one[\hat{x}_c<r]\oplus w_{1/2},
\]
where $\theta_{1/2}=(r+s)\bmod N$ and $w_{1/2}=\one[r+s\ge N]$.
Therefore,
\[
\msb(x+c)=\neg\big(\one[\hat{x}_c<\theta_{1/2}]\oplus \one[\hat{x}_c<r]\oplus w_{1/2}\big).
\]
All comparisons above are of the backend-supported form $\one[\mathsf{view}_{k,c}(u)<\theta]$ on the public $\hat{x}$.

%%%%%%%%%%%%%%%%%%%%%%%%%%%%%%%%%%%%%%%%%%%%%%%%%%%%%%%%%%%%%%%%%%%%%%%%%%%%%%
\section{Proof of the Interval Translation Lemma}
\label{app:interval_translate}

\begin{lemma}[Restatement of Lemma~\ref{lem:interval_translate}]
Let $N=2^n$ and fix $r\in\{0,\ldots,N-1\}$.
For an interval $I=[\alpha,\beta)\subseteq[0,N)$ and the map
$\hat{x}=(x+r)\bmod N$, the image $T_r(I)$ is the cyclic interval
$[\alpha+r,\beta+r)\bmod N$, which is either a single standard interval or a
union of two standard intervals.
Across a full partition $0=\alpha_0<\cdots<\alpha_m=N$, at most one interval
splits after translation.
\end{lemma}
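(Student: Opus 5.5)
The plan is to argue directly about the translation $T_r(x)=(x+r)\bmod N$, which is a rotation, i.e.\ a bijection of $\mathbb{Z}_N$ that preserves cyclic order. I will treat the statement in three parts: the shape of the image of one interval, a counting argument over a partition, and the standard-interval count that feeds Protocol~\ref{alg:compile_tfss}.

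\textbf{Single interval.} For $I=[\alpha,\beta)$ with $0\le\alpha<\beta\le N$, every $x\in I$ has the form $x=\alpha+j$ with $0\le j<\beta-\alpha$. Hence $T_r(I)=\{(\alpha+r+j)\bmod N\}$, which is by definition the cyclic interval $[\alpha+r,\beta+r)\bmod N$. I will split on the integer sum $\beta+r$, using $\alpha+r<\beta+r$ and $\beta-\alpha\le N$:
\begin{itemize}
\item If $\beta+r\le N$, no reduction occurs. The image is the standard interval $[\alpha+r,\beta+r)$, where the right endpoint $N$ is allowed.
\item If $\alpha+r\ge N$, both endpoints reduce by $N$. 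The image is the standard interval $[\alpha+r-N,\beta+r-N)$.
\item If $\alpha+r<N<\beta+r$, the image is $[\alpha+r,N)\cup[0,\beta+r-N)$. Both pieces are nonempty, so the interval genuinely splits.
\end{itemize}
Since $r<N$ and $\alpha<N$ give $\alpha+r<2N$, these cases are exhaustive.

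\textbf{At most one split.} An interval $I_i$ splits exactly when $\alpha_i+r<N<\alpha_{i+1}+r$. The integer sequence $\alpha_i+r$ is strictly increasing from $r<N$ up to $N+r\ge N$. Therefore there is a unique index $i$ with $\alpha_i+r<N\le\alpha_{i+1}+r$, and only that index can split. It splits iff the inequality on the right is strict, i.e.\ iff $N\ne\alpha_{i+1}+r$. Equivalently, no interval splits iff $N-r$ is a boundary, which is the case $0\in\{s_i\}$ in Protocol~\ref{alg:compile_tfss}. This is the parenthetical claim about the wrap point hitting a boundary.

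\textbf{Counting.} The $m$ images are pairwise disjoint and cover $\mathbb{Z}_N$, because $T_r$ is a bijection and the $I_i$ partition $[0,N)$. At most one of them contributes two standard pieces, so $\hat{x}$-space is partitioned into at most $m+1$ standard intervals, with exactly $m$ when no split occurs.

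The only delicate point is boundary bookkeeping: the sentinel $\alpha_m=N$, the half-open endpoints, and the edge cases $r=0$ or $\beta+r=N$, which must not be miscounted as splits. The case analysis on $\beta+r$ versus $N$, with $N$ itself assigned to the non-split case, handles all of these uniformly.
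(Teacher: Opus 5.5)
Your proof is correct and is essentially the paper's: the same three-case analysis of $\alpha+r$ and $\beta+r$ against $N$ for a single interval, and your monotone-crossing argument for the partition is the paper's ``single wrap point lies strictly inside at most one disjoint interval'' observation in translated form, since $\alpha_i+r<N<\alpha_{i+1}+r$ iff $\alpha_i<N-r<\alpha_{i+1}$. You are slightly more explicit than the paper in deriving the split criterion, identifying the no-split case with $0\in\{s_i\}$, and proving the $m+1$ count.
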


\begin{proof}
The map $T_r$ is a rotation on the circle $\mathbb{Z}_N$.
The image of $I=[\alpha,\beta)$ is
\[
 \begin{aligned}
 T_r(I) &= \{(x+r)\bmod N : x\in[\alpha,\beta)\} \\
 &= [\alpha+r,\beta+r)\bmod N.
 \end{aligned}
\]
Write $a:=\alpha+r$ and $b:=\beta+r$ (so $0\le a<b<2N$). There are three cases:
(i) if $b\le N$, then $T_r(I)=[a,b)$ is a standard interval;
(ii) if $a\ge N$, then both endpoints wrap and $T_r(I)=[a-N,b-N)$ is a standard interval;
(iii) if $a<N<b$, then the image wraps around $N$ and $T_r(I)=[a,N)\cup[0,b-N)$, a union of two standard intervals.
 
For a full partition, the only way an interval splits is if the wrap point
$x_0:=(N-r)\bmod N\in[0,N)$ lies strictly inside that interval in $x$-space
(equivalently, if $0$ lies strictly inside its image in $\hat{x}$-space). Since $x_0$
is a single point and the intervals are disjoint, at most one interval contains it,
hence at most one interval splits (and if $x_0$ hits a boundary, no interval splits).
\end{proof}

%%%%%%%%%%%%%%%%%%%%%%%%%%%%%%%%%%%%%%%%%%%%%%%%%%%%%%%%%%%%%%%%%%%%%%%%%%%%%%
\section{Correctness of Compilation and the Two-Call Theorem}
\label{app:two_call_correctness}

This section provides a full proof of Theorem~\ref{thm:two_call}.
We first formalize the compiled evaluation procedure and then prove correctness
and the stated complexity bound.

\subsection{Compiled Gate Evaluation Procedure}
\label{app:eval_procedure}

Fix a typed operator specification $F:\mathsf{A}_n\to \mathsf{A}_n^r\times\mathsf{B}^\ell$
with partition $(\alpha_i)_{i=0}^m$ and per-interval data $(P_i,B_i)$.
Fix preprocessing mask $r_{\mathrm{in}}\in R$ and define $\hat{x}=x+r_{\mathrm{in}}\bmod 2^n$.

The compiler emits:
(i) a predicate primitive instance $\Pi_{\mathrm{pred}}$ producing XOR-shares of all
primitive masked comparisons required by the rewritten Boolean circuit,
and (ii) an interval-lookup primitive instance $\Pi_{\mathrm{coeff}}$ producing
additive shares of the active interval's coefficient vector and auxiliary constants.

Online evaluation on a shared input $\share{x}$ proceeds as:
\begin{enumerate}
 \item Materialize public $\hat{x}$ by opening $\hat{x}=x+r_{\mathrm{in}}$
 (Protocol~\ref{alg:shares_to_masked}).
 \item Run $\Pi_{\mathrm{pred}}$ (if needed) to obtain XOR-shares of all required
 primitive masked comparisons on $\hat{x}$ and its public views (low bits, shifts).
 \item Run $\Pi_{\mathrm{coeff}}$ (if needed) to obtain additive shares of the
 active interval's coefficient vector (and auxiliary constants).
 \item Compute shares of $x=\hat{x}-r_{\mathrm{in}}$ locally (Protocol~\ref{alg:masked_to_shares}).
 \item Evaluate $r$ degree-$\le d$ polynomials via Horner's rule using Beaver triples
 to obtain $\share{\mathbf{y}}\in R^r$.
 \item Evaluate the normalized Boolean circuit (Lemma~\ref{lem:boolean_normalize}) over XOR shares
 using AND correlation to obtain $\langle \mathbf{z}\rangle\in\mathsf{B}^\ell$,
 and apply any deterministic post-processing $\Phi$ using B2A/A2B as needed.
\end{enumerate}

\subsection{Proof of Theorem~\ref{thm:two_call}}
\label{app:proof_two_call}

\begin{theorem}[Restatement of Theorem~\ref{thm:two_call}]
Assume a backend provides:
(i) a multi-view packed predicate primitive (Definition~\ref{def:tfss_pred}),
and (ii) an interval lookup primitive with vector payload.
Then any scalar gate instance can be evaluated from a public masked input $\hat{x}$ using
at most two non-interactive backend interface evaluations, plus $O(r\cdot d)$ ring
multiplications for Horner evaluation and the Boolean/mixed-domain costs stated
in Theorem~\ref{thm:two_call}.
\end{theorem}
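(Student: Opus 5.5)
The proof splits into a correctness part and a cost-accounting part. Both are driven by the evaluation procedure of Appendix~\ref{app:eval_procedure} applied to the instances $(\Pi_{\mathrm{pred}},\Pi_{\mathrm{coeff}})$ emitted by Protocol~\ref{alg:compile_tfss}.

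\textbf{Correctness of the lookup path.} By Lemma~\ref{lem:interval_translate}, the map $x\mapsto\hat{x}$ sends each $I_i$ to a cyclic interval, and at most one of these wraps. I would check the two branches of Protocol~\ref{alg:compile_tfss} in turn.
\begin{itemize}
\item If $\beta_0>0$, the wrapped interval is split at $0$ and its payload is duplicated.
\item If $\beta_0=0$, a dummy split point $\delta$ is inserted and the payload is again duplicated.
\end{itemize}
In either case the boundary list $B$ partitions $[0,2^n)$, and every $\hat{x}\in[B_j,B_{j+1})$ satisfies $x=\hat{x}-r_{\mathrm{in}}\in I_{\mathrm{ord}_j}$. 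Definition~\ref{def:tfss_lookup} then guarantees that one evaluation of $\Pi_{\mathrm{coeff}}$ returns additive shares of the coefficients of $P_{i^\star}$ for the true interval $i^\star$ containing $x$.

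\textbf{Correctness of the polynomial evaluation.} Shares of $x$ are obtained locally by Protocol~\ref{alg:masked_to_shares}. Horner's rule, $y=(\cdots(c_d x+c_{d-1})x+\cdots)+c_0$, uses one shared-times-shared product per step, because both the coefficient and $x$ are secret-shared. That gives $d$ Beaver multiplications per output and $r\cdot d$ in total, which is $O(r\cdot d)$.

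\textbf{Correctness of the predicate path.} Each primitive predicate $C_\beta$, $D_{\gamma,f}$, $\msb(x+c)$ or signed comparison is rewritten by the appropriate result:
\begin{itemize}
\item $C_\beta$ by Lemma~\ref{lem:mask_rewrite_cmp};
\item $D_{\gamma,f}$ by Lemma~\ref{lem:mask_rewrite_lowbit};
\item $\msb(x+c)$ and signed comparisons by the shifted rewrites of Appendix~\ref{app:signed_msb_rewrites}.
\end{itemize}
Each rewrite yields an XOR of comparisons of the form $\one[\mathsf{view}_{k,c}(\hat{x})<\theta]$ together with a secret-shared carry bit. Every such comparison atom is placed in the single query list $\mathcal{Q}$, so one evaluation of $\Pi_{\mathrm{pred}}$ (Definition~\ref{def:tfss_pred}) yields XOR-shares of all of them. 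The carry bits belong to $\kappa$ and are folded in locally by XOR, with no interaction. The interval indicators are obtained from the same atoms via the corollary following Lemma~\ref{lem:mask_rewrite_cmp}. The normalized circuit of Lemma~\ref{lem:boolean_normalize} is then evaluated on these bits: XOR and NOT are local, and each $\wedge$ costs one AND correlation, so the number of AND gates is $G_\wedge$. Lemma~\ref{lem:indicator_partition} shows that this circuit outputs the bits of the active piece.

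\textbf{Extension to specification-compatible gates.} By Definition~\ref{def:suf_compatible}, $G(x)=\Phi(F(x),\kappa,x,\hat{x},\mathsf{pub})$. Evaluating $\Phi$ gate by gate on shares is correct by the correctness of each subprotocol. Its costs are $M_\Phi$ Beaver multiplications and $G_{\mathrm{mix}}$ B2A/A2B conversions, and both counts depend only on the descriptor. This establishes the second half of Lemma~\ref{lem:compiler_correctness} as well.

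\textbf{Cost accounting.} The two backend calls are non-interactive by Definitions~\ref{def:tfss_pred} and~\ref{def:tfss_lookup}. The only other communication is the masked opening and the openings inside the Beaver, AND and B2A subprotocols. A call can be omitted when $\mathcal{Q}$ is empty or when the payload is empty, that is, $p=0$.

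The step I expect to be the main obstacle is showing that a single packed call truly suffices. This requires that every predicate, including low-bit, MSB and signed ones at different widths $k$, reduces to views of the same $\hat{x}$. For the low-bit case I would use that $2^f\mid 2^n$, as in the proof of Lemma~\ref{lem:mask_rewrite_lowbit}. It also requires that the padding in Proposition~\ref{prop:shape_padding} does not change the looked-up payload, which the payload duplication handles.
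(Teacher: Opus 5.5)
Your proposal is correct and follows essentially the same route as the paper's proof. Both argue lookup correctness from Lemma~\ref{lem:interval_translate} plus payload duplication, collect the masked predicate rewrites into one packed query list with carry bits as shared constants, apply Boolean normalization via interval indicators, use Horner evaluation with $r\cdot d$ Beaver multiplications, and count at most one call to each backend primitive. Your explicit check of both branches of Protocol~\ref{alg:compile_tfss} and of the $\Phi$ extension adds detail the paper leaves implicit, but it is the same proof.
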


\begin{proof}
We prove correctness and then the complexity bound.

\paragraph{Step 1: Correct Interval-Dependent Coefficient Selection.}
Let the secret input be $x\in R$ with canonical representative in $[0,2^n)$.
Let $i^\star$ be the unique index such that $x\in I_{i^\star}=[\alpha_{i^\star},\alpha_{i^\star+1})$.
By Lemma~\ref{lem:interval_translate}, the translated image of $I_{i^\star}$ under
$\hat{x}=x+r_{\mathrm{in}}\bmod 2^n$ is either:
(i) one standard interval in $\hat{x}$-space, or (ii) two standard intervals
when wrapping occurs. In compilation, if splitting occurs, the payload for the
split interval is duplicated. Therefore, for the unique translated interval
containing $\hat{x}$, the interval-lookup primitive returns additive shares of
exactly the coefficient vector associated with $I_{i^\star}$.

\paragraph{Step 2: Correct Predicate Values Under Masking.}
Consider any primitive predicate appearing in the predicate grammar.
\begin{itemize}
 \item For unsigned comparisons $C_\beta(x)=\one[x<\beta]$, Lemma~\ref{lem:mask_rewrite_cmp}
 expresses $C_\beta(x)$ as an XOR of two masked comparisons on $\hat{x}$ and
 a carry bit $w$ that depends only on the preprocessing mask and constants.
 \item For low-bit predicates $D_{\gamma,f}(x)=\one[(x\bmod 2^f)<\gamma]$,
 Lemma~\ref{lem:mask_rewrite_lowbit} gives the analogous rewrite over $\mathbb{Z}_{2^f}$.
 \item MSB and signed predicates reduce to unsigned comparisons after fixed public shifts, and the corresponding
 masked rewrites follow by applying Lemma~\ref{lem:mask_rewrite_cmp} to the shifted comparisons.
\end{itemize}
The predicate primitive $\Pi_{\mathrm{pred}}$ is constructed to output XOR-shares of all
masked comparison atoms used in these rewrites (including comparisons against secret
mask-derived thresholds such as $r_{\mathrm{in}}$ and $r_{\mathrm{in}}\bmod 2^f$).
Carry bits (which are constants independent of $x$) are provided as XOR-shared constants in preprocessing.
Therefore, parties can locally compute XOR-shares of every primitive predicate value on $x$.

\paragraph{Step 3: Eliminating Piecewise Boolean Control Flow.}
If the Boolean outputs of the operator specification are piecewise,
Lemma~\ref{lem:boolean_normalize} rewrites them into a single global Boolean circuit
using interval indicators $J_i(x)$. Each $J_i(x)$ is itself a Boolean circuit over
comparisons of the form $\one[x<\alpha_i]$, hence can be computed correctly from
the masked comparison atoms obtained in Step~2. Consequently, the normalized Boolean
circuit evaluates to exactly the Boolean outputs of the operator specification, without revealing
the active interval.

\paragraph{Step 4: Correct Arithmetic Outputs via Horner Evaluation.}
Parties compute additive shares of $x=\hat{x}-r_{\mathrm{in}}$ locally
(Protocol~\ref{alg:masked_to_shares}). They hold additive shares of the correct
coefficient vector for the active interval from Step~1.
Evaluating each polynomial $P_{i^\star,j}(x)$ of degree at most $d$ by Horner's rule
uses at most $d$ ring multiplications (exactly $d$ if coefficients are padded to degree $d$ and evaluated uniformly) and yields additive shares of $P_{i^\star,j}(x)$ by standard Beaver multiplication correctness.
This produces additive shares of the arithmetic outputs $P_{i^\star}(x)$.

\paragraph{Step 5: Complexity Bound and ``At Most Two'' Primitive Evaluations.}
The protocol invokes at most one predicate primitive evaluation (Step~2) and at most
one lookup primitive evaluation (Step~1), hence at most two non-interactive backend interface
evaluations on the same public input $\hat{x}$. Either call can be omitted in degenerate
cases (no Boolean outputs/predicates or no interval-dependent coefficients).
Horner evaluation uses $r\cdot d$ ring multiplications, hence $O(r\cdot d)$ Beaver triples.
The Boolean circuit cost is captured by its number of AND gates $G_\wedge$ and mixed-domain
uses $G_{\mathrm{mix}}$, as stated.
\end{proof}

%%%%%%%%%%%%%%%%%%%%%%%%%%%%%%%%%%%%%%%%%%%%%%%%%%%%%%%%%%%%%%%%%%%%%%%%%%%%%%
\section{A Standard Real/Ideal Security Statement with Explicit Leakage}
\label{app:security}

This section introduces a standard real/ideal statement,
including an explicit leakage function. Full proofs are given for the gate-level
protocol; end-to-end transformer inference follows by standard composition.

\subsection{Leakage Function}
\label{app:leakage}

For a fixed gate type $\tau$, define the public shape leakage
\[
 \begin{aligned}
 \mathcal{L}_{\mathrm{shape}}(\tau) := (&T,\{k_t\}_{t=1}^T, M, p, \\
 &\#\text{mults}, \#\text{ANDs}, \#\text{B2A/A2B}),
 \end{aligned}
\]
where:
$T$ and $\{k_t\}$ describe the multi-view predicate primitive shape,
$M$ and $p$ describe the interval lookup shape,
and the remaining counts describe how many standard preprocessed subprotocols
are invoked in post-processing.
By Proposition~\ref{prop:shape_padding}, this leakage depends only on $\tau$ (public).

The online protocol additionally reveals public masked wires such as $\hat{x}$ and masked outputs $\hat{y}$; we treat these as explicit leakage:
\[
\mathcal{L}_{\mathrm{online}} := (\hat{x}, \text{ optional masked outputs}).
\]
Since $\hat{x}=x+r_{\mathrm{in}}$ with uniform $r_{\mathrm{in}}$, $\hat{x}$ is uniform and
information-theoretically independent of $x$.

\subsection{Ideal Functionality for a Scalar Gate Type (with Leakage)}
\label{app:ideal_gate}

We define an ideal functionality $\mathcal{F}_{\tau}^{\mathcal{L}}$ for a gate type $\tau$
that captures exactly what the real protocol reveals.

\paragraph{Functionality $\mathcal{F}_{\tau}^{\mathcal{L}}$ (Informal Description).}
On input shares from parties that reconstruct to $x\in R$, the functionality:
(i) samples a uniform mask $r_{\mathrm{in}}\leftarrow R$,
(ii) outputs $\hat{x}=x+r_{\mathrm{in}}\bmod 2^n$ to both parties as leakage,
(iii) outputs additive/XOR shares of $(\mathbf{y},\mathbf{z})=G_\tau(x)$ to the parties,
and (iv) outputs $\mathcal{L}_{\mathrm{shape}}(\tau)$ as public leakage.

This ideal world models exactly the fact that the protocol reveals masked values but
keeps the unmasked $x$ secret.

\subsection{Security of Subprotocols Assumed}
\label{app:assumptions}

We assume standard semi-honest security for:
\begin{itemize}
 \item Beaver multiplication over $R$,
 \item Boolean AND over XOR shares (equivalently, multiplication over $\mathbb{Z}_2$),
 \item B2A/A2B conversions (when used),
 \item and backend primitive families, in the standard ``one-key privacy'' sense:
 a single party's key (and all local evaluations on any polynomial number of inputs)
 leaks nothing about instance parameters beyond public shape.
\end{itemize}
These are standard assumptions in the preprocessing MPC literature and in FSS-based
secure inference systems.

\subsection{Proof of Semi-Honest Security for Compiled Scalar Gates}
\label{app:proof_security}

\begin{theorem}[Gate-level semi-honest security]
\label{thm:gate_security_formal}
Fix a gate type $\tau$ and consider the compiled gate protocol for one instance.
Under the assumptions in Appendix~\ref{app:assumptions}, for each $b\in\{0,1\}$ there exists a PPT simulator
$\mathsf{Sim}_b$ such that for any input distribution on $x$ and any auxiliary input,
the real view of a semi-honest adversary corrupting $P_b$ is computationally indistinguishable from the simulator's output given only:
\[
 \begin{aligned}
 (&\text{the corrupted party's input share},\ \text{its output shares}, \\
 &\mathcal{L}_{\mathrm{shape}}(\tau),\ \mathcal{L}_{\mathrm{online}}).
 \end{aligned}
\]
Equivalently, the protocol securely realizes $\mathcal{F}_{\tau}^{\mathcal{L}}$ in the semi-honest preprocessing model.
\end{theorem}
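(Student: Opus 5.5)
We will prove this with a standard hybrid argument in the preprocessing model. First we build $\mathsf{Sim}_b$ explicitly, and then we move from the real view to the simulated one through a short chain of hybrids. The real view of $P_b$ consists of five components:
\begin{itemize}
\item its input share $x_b$;
\item its preprocessing material: the mask share $r_b$, its shares of the carry bits $\kappa$, the FSS keys $k^{\mathrm{pred}}_b,k^{\mathrm{coeff}}_b$, and its Beaver/AND/B2A correlations;
\item the opening message $\hat{x}_{1-b}$ received in Protocol~\ref{alg:shares_to_masked};
\item the opening messages of the post-processing subprotocols;
\item its output shares.
\end{itemize}
The FSS evaluations and the masked-to-shares step (Protocol~\ref{alg:masked_to_shares}) are local, so they add nothing to the view beyond what is computable from these components.

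\textbf{The simulator.} Given $x_b$, its output shares, $\mathcal{L}_{\mathrm{shape}}(\tau)$ and $\hat{x}$, the simulator $\mathsf{Sim}_b$ proceeds as follows.
\begin{enumerate}
\item It samples $r_b$ uniformly and sets the incoming message to $\hat{x}_{1-b}:=\hat{x}-(x_b+r_b)$. This makes the opening consistent with the public $\hat{x}$.
\item It samples the shares of all instance constants $\kappa$ uniformly, as XOR shares for bits and additive shares for ring elements.
\item It runs the FSS simulators for the packed-comparison and interval-lookup families on the shapes $(T,\{k_t\})$ and $(M,p)$ to produce $k^{\mathrm{pred}}_b,k^{\mathrm{coeff}}_b$.
\item It samples Beaver/AND/B2A correlation shares and runs the post-processing circuit $\Phi$ locally on its simulated shares. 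Each opened value (for example $x-a$, $y-b$ in a Beaver multiplication) it replaces by a uniform element.
\item For the last layer of openings, it programs the messages so that its local output shares equal the given output shares.
\end{enumerate}

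\textbf{The hybrids.}
\begin{itemize}
\item $H_0$ is the real view.
\item In $H_1$ we replace the FSS keys by simulated keys generated from $\mathcal{L}_{\mathrm{shape}}$ alone. This step is justified by single-key FSS security. The one prerequisite is that the instance shapes do not depend on $r_{\mathrm{in}}$, which is Proposition~\ref{prop:shape_padding}. Indistinguishability follows by a reduction that embeds the challenge key and generates everything else itself; the reduction can do this because everything else depends only on $(x,r_{\mathrm{in}})$ and the correlations, which it samples.
\item In $H_2$ we replace the shares of $\kappa$ and $r_b$ by uniform values. This is an identical distribution, since the dealer's shares are uniform and independent of the other party's share, and after $H_1$ nothing else in the view depends on $\kappa$ or on $r_{\mathrm{in}}$ beyond $\hat{x}$.
\item In $H_3$ we replace each subprotocol transcript by its simulator output, one subprotocol at a time. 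Each opened value is masked by a fresh one-time-pad component of a triple or correlation, so these steps are perfect. The exception is the output-consistency step, which uses the standard output-share programming.
\end{itemize}
For correctness of the joint distribution with the honest party's output we use Lemma~\ref{lem:compiler_correctness}: in every hybrid the reconstructed outputs equal $G_\tau(x)$. Hence the joint distribution of (view, outputs) matches the ideal execution of $\mathcal{F}_\tau^{\mathcal{L}}$.

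\textbf{Main obstacle.} The delicate step is $H_0\to H_1$. The FSS instance parameters, namely the shifted thresholds $\theta_t$, the translated boundaries $B$ and the permuted payloads, are correlated with $r_{\mathrm{in}}$. The mask $r_{\mathrm{in}}$ in turn is correlated with the public $\hat{x}$ and with the secret-shared $\kappa$. The reduction must therefore be stated for adaptively chosen, secret-dependent instance parameters, and single-key privacy must hold conditioned on $\hat{x}$.

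We plan to handle this by ordering the hybrids so that the FSS keys are switched first, while $r_{\mathrm{in}}$ is still a real value known to the reduction. The FSS security game quantifies over arbitrary instance functions with fixed shape, so the reduction can supply the $r_{\mathrm{in}}$-dependent instances as its challenge. Only after the keys are removed do we argue that $\hat{x}$ is uniform and independent of $x$, and that the remaining shares of $r_{\mathrm{in}}$ and $\kappa$ are uniform.

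A secondary point we will check is that the padding and no-deduplication conventions of Protocol~\ref{alg:compile_tfss} really make $\{k_t\}$ and $M$ mask-independent, including in the boundary case $m=N$. Without this, the leakage function would depend on $r_{\mathrm{in}}$.
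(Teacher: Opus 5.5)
Your proposal is correct and follows essentially the paper's route: FSS keys simulated from $\mathcal{L}_{\mathrm{shape}}$ (sound by Proposition~\ref{prop:shape_padding}), the opening programmed to the leaked $\hat{x}$, FSS evaluation being local, and one-time-pad simulators for the Beaver/AND/B2A openings, with your hybrid ordering making explicit what the paper compresses into an appeal to sequential composition. Two details to tighten: in the $H_0\to H_1$ reduction the honest party's FSS outputs depend on $k_{1-b}$ and not only on $(x,r_{\mathrm{in}})$, so define them as the true value (e.g.\ $\mathrm{LUT}(\hat{x})$) minus $\mathsf{Eval}_b$ of the challenge key; and program output shares through a fresh correlation or $\kappa$ share (e.g.\ the $c$-share of the final Beaver triple) rather than through the opened messages, which must stay uniform.
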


\begin{proof}
We construct $\mathsf{Sim}_b$ by composing simulators for each subprotocol. The simulator is additionally given the corrupted party's output shares, which are part of the adversary's view in both the real and ideal executions.

\paragraph{Preprocessing Simulation.}
All preprocessing material (masks, backend interface keys, Beaver/AND/B2A correlations) is input-independent.
Therefore, $\mathsf{Sim}_b$ samples the corrupted party's preprocessing view from the correct distribution
conditioned on $\mathcal{L}_{\mathrm{shape}}(\tau)$ (which fixes public key shapes).
This is feasible by the assumed security of backend primitive instances (keys reveal no instance parameters beyond shape)
and by standard definitions of preprocessed correlations (Beaver/AND/B2A shares are uniformly random subject to correctness).

\paragraph{Online Simulation: Masked Opening of $\hat{x}$.}
In the real protocol, the transcript reveals $\hat{x}=x+r_{\mathrm{in}}$, which is uniform over $R$.
In the ideal world, $\hat{x}$ is provided by $\mathcal{L}_{\mathrm{online}}$.
The simulator programs the shares-to-masked opening messages consistently with this leaked $\hat{x}$
(using the fact that the opened value is public and the honest party's share message is not otherwise constrained).
This is the standard simulation for openings of one-time pads.

\paragraph{Online Simulation: Non-Interactive Backend Interface Evaluations.}
Backend interface evaluations are local: they produce no interaction transcript.
The corrupted party's internal state includes its local outputs, which are deterministic functions of its key and the public input $\hat{x}$.
Since backend interface keys are simulated or sampled to be indistinguishable from real keys of the same shape,
the distribution of the corrupted party's entire local evaluation behavior is indistinguishable.

\paragraph{Online Simulation: Beaver/AND/B2A/A2B Subprotocols.}
All interaction beyond revealing masked wires occurs inside standard secure subprotocols:
Beaver multiplications open masked differences $(e,f)$, Boolean AND opens $\mathbb{Z}_2$-masked values,
and B2A/A2B opens standard masked values.
By the assumed semi-honest security of these subprotocols, their transcripts are simulatable given only their
public openings and the corrupted party's local inputs/outputs to those subprotocol calls.
Since the overall functionality reveals only masked wires and the protocol does not reveal intermediate unmasked secrets,
the simulator can invoke the corresponding subprotocol simulators to generate indistinguishable transcripts.

\paragraph{Composition.}
Finally, the protocol is a sequential composition of the above components.
Standard composition theorems for semi-honest secure protocols imply that the concatenation of the simulated
views is indistinguishable from the real view, completing the proof.
\end{proof}

\subsection{Security and Correctness for Compiled Gates}
\label{app:composite_proofs}

\begin{proof}[Proof Sketch]
Compiled gates instantiate the compiled gate protocol.
They then apply a deterministic post-processing map $\Phi_\tau$ using only secure share-based subprotocols.
Correctness follows from Theorem~\ref{thm:two_call} and the correctness of the post-processing circuit.
Security follows from Theorem~\ref{thm:gate_security_formal} and sequential composition.
Optional masked outputs satisfy $\hat{\mathbf{y}}=\mathbf{y}+r_{\mathrm{out}}$.
They reveal no information about $\mathbf{y}$ since $r_{\mathrm{out}}$ is fresh uniform.
\end{proof}

\section{Example: A Complete $(F,\Phi)$ Pair}
\label{app:worked_example}

This appendix gives a fully concrete specification+post-processing example that can be read independently.
The goal is: we pick the smallest operator that (i) has a genuine piecewise structure, (ii) produces at least one predicate bit, and (iii) uses a nontrivial post-processing circuit $\Phi$.
For simplicity we present a signed ReLU gate; real operators (e.g., spline-approximated GELU/SiLU) follow the same pattern with more intervals and higher-degree polynomials.

\paragraph{Target Scalar Gate.}
Fix word size $n$ and let $R=\Z{2^n}$.
Interpret $\mathsf{A}_n$ as two's-complement signed fixed-point with any number of fractional bits; the fractional metadata is irrelevant for this example because the map is homogeneous.
Define the scalar gate
\[
G_{\mathrm{ReLU}}:\mathsf{A}_n \to \mathsf{A}_n \times \mathsf{B},
\qquad
G_{\mathrm{ReLU}}(x) = \bigl(\max(x,0),\msb(x)\bigr),
\]
where $\msb(x)=1$ iff $x$ is negative in two's complement.

\paragraph{Operator Specification $F_{\mathrm{coeff}}$.}
Instead of directly outputting $\max(x,0)$ as a polynomial piece, we show how to use $\Phi$ by having the operator specification output \emph{coefficients} for an affine form.
Let $N=2^n$ and use the canonical representative $\mathrm{rep}(x)\in\{0,\ldots,N-1\}$.
Consider the partition
\[
0=\alpha_0 < \alpha_1 < \alpha_2=N,
\qquad
\alpha_1 = N/2,
\]
which corresponds to non-negative vs.\ negative signed values.
Thus $I_0=[0,N/2)$ are non-negative representatives and $I_1=[N/2,N)$ are negative representatives.

We define an operator specification
\[
F_{\mathrm{coeff}}:\mathsf{A}_n \to \mathsf{A}_n^2 \times \mathsf{B}
\]
with $r=2$ arithmetic outputs, the affine coefficients $a,b\in R$ and $\ell=1$ Boolean output (the sign bit $z$).
All arithmetic pieces are degree-$0$ polynomials:
\[
P_0(x)=(a,b)=(1,0), \qquad P_1(x)=(a,b)=(0,0).
\]
For the Boolean output we use the primitive predicate $\msb(x)$:
\[
B_0(x)=B_1(x)=\msb(x).
\]
In words: $F_{\mathrm{coeff}}(x)$ returns the pair $(a,b)$ such that $a\cdot x + b = \max(x,0)$, together with $z=\msb(x)$.

\paragraph{Post-Processing Circuit $\Phi_{\mathrm{ReLU}}$.}
Let $\Phi_{\mathrm{ReLU}}$ take as input:
(i) additive shares of $(a,b)$ from $F_{\mathrm{coeff}}(x)$,
(ii) an XOR-sharing $\bshare{z}$ of $z=\msb(x)$,
(iii) additive shares of $x$ (derived from a public masked opening),
and return $(y,z)$ where
\[
y = a\cdot x + b \in R.
\]
Concretely, $\Phi_{\mathrm{ReLU}}$ uses one Beaver multiplication for $a\cdot x$ and then local addition of $b$.
The Boolean output $z$ is simply forwarded.

\paragraph{Compiled to Two Backend Calls.}
Let preprocessing sample a uniform mask $\rin\in R$ and distribute $\share{\rin}$.
Online evaluation will open the public masked value $\hatx = x + \rin \bmod N$ using Protocol~\ref{alg:shares_to_masked}.

\emph{Interval lookup instance $\Pi_{\mathrm{coeff}}$.}
There are $m=2$ original intervals, so we allocate $M=\min(m+1,N)=3$ intervals after padding (Protocol~\ref{alg:compile_tfss}).
The translated starts are
\[
s_0=(\alpha_0+\rin)\bmod N = \rin,\qquad
s_1=(\alpha_1+\rin)\bmod N = (\rin+N/2)\bmod N.
\]
Sorting $\{s_0,s_1\}$ yields $(\beta_0,\beta_1)$ and an order vector $\pi$.
Following Protocol~\ref{alg:compile_tfss}, we construct boundaries $B=(B_0,\ldots,B_M)$ and an index list $\mathrm{ord}\in\{0,1\}^M$ that may duplicate one payload so that the \emph{public} lookup shape $(M,p)$ is independent of the sampled mask.
The payload dimension is $p=2$ and each payload is
\[
v_j = (a_{\mathrm{ord}_j},b_{\mathrm{ord}_j}) \in R^2.
\]
Evaluating $\Pi_{\mathrm{coeff}}$ on the public $\hatx$ returns additive shares of the correct $(a,b)$ without revealing the active interval.

\emph{Packed-comparison instance $\Pi_{\mathrm{pred}}$.}
To compute $z=\msb(x)$, we can use $\msb(x)=\neg C_{N/2}(x)$, where $C_\beta(x)=\one[x<\beta]$.
Applying Lemma~\ref{lem:mask_rewrite_cmp} with $\beta=N/2$ yields
\[
C_{N/2}(x)
=
\one[\hatx < \theta]\;\oplus\;\one[\hatx < \rin]\;\oplus\;w,
\qquad
\theta=(\rin+N/2)\bmod N,
\]
where $w=\one[\rin+N/2\ge N]$ is a carry bit that depends only on the mask and is therefore provided as a secret-shared preprocessing constant.
Thus the packed-comparison query list contains the two atoms $\one[\hatx<\theta]$ and $\one[\hatx<\rin]$ (both width $k=n$).
Parties locally combine these XOR-shared atoms with $\bshare{w}$ to obtain
$\bshare{C_{N/2}(x)}$, and then locally negate to get
$\bshare{z}=\bshare{\msb(x)}$.

\paragraph{End-to-End Online Evaluation.}
Given an input $\share{x}$, parties:
(i) open $\hatx=x+\rin$ (Protocol~\ref{alg:shares_to_masked});
(ii) locally evaluate the two backend instances on $\hatx$ to obtain $\bshare{z}$ and $\share{(a,b)}$;
(iii) locally derive $\share{x}$ from $\hatx$ and $\share{\rin}$ (Protocol~\ref{alg:masked_to_shares});
(iv) run $\Phi_{\mathrm{ReLU}}$ to compute $\share{y}=a\cdot x + b$ and output $(\share{y},\bshare{z})$.

\paragraph{Remark: Extending to a Small GELU Example.}
To specify a spline-approximated GELU, one would keep the same structure but set $r=1$ and choose $m>2$ intervals with degree-$d$ polynomials $P_i(x)$.
Then $\Pi_{\mathrm{coeff}}$ returns the active interval's coefficient vector and the post-processing runs Horner evaluation (Theorem~\ref{thm:two_call}).

\section{Complexity Accounting for a Compiled Gate}
\label{app:complexity}

Let $T$ be the emitted number of primitive masked comparison atoms in $\Pi_{\mathrm{pred}}$ after any specification-only syntactic deduplication and optional padding, so that $T$ depends only on the public operator specification and metadata. Let $M$ and $p$ be the interval count and payload dimension of $\Pi_{\mathrm{coeff}}$ under the padding convention, where $M:=\min(m+1,2^n)$ for an operator specification with $m$ original intervals (Proposition~\ref{prop:shape_padding}).
Let $M_{\mathrm{A}}$ be the number of ring multiplications (implemented via Beaver triples) performed in post-processing (Horner evaluation plus any extra multiplications in $\Phi$; cf.\ Theorem~\ref{thm:two_call}), let $M_{\mathrm{B}}$ be the number of Boolean ANDs, and let $G_{\mathrm{mix}}$ be the number of mixed-domain conversions.

A compiled gate instance uses:
\begin{itemize}
 \item One packed-comparison primitive instance with output length $T$ (or omitted if $T=0$),
 \item One interval-lookup primitive instance with $(M,p)$ (or omitted if the arithmetic payload is interval-independent, e.g., the operator specification has a single interval),
 \item $M_{\mathrm{A}}$ Beaver multiplications over $R$,
 \item $M_{\mathrm{B}}$ preprocessed Boolean ANDs over $\mathbb{Z}_2$,
 \item $G_{\mathrm{mix}}$ mixed-domain conversions (\textsf{B2A/A2B}).
\end{itemize}
This accounting is backend-agnostic: concrete preprocessing size/time follows by instantiating the
primitive and preprocessing costs of the chosen backend interface and triple-generation mechanisms.

\end{document}